\definecolor{mycolor}{rgb}{0, 0, 0}
\DeclareMathOperator{\PR}{\mathrm{Pr}}
\def\Re{\mathbb R}
\newcommand{\E}{\mathcal{E}}
\def\Hyp{\mathcal H}
\newcommand{\D}{\mathcal{D}}
\def\Re{\mathbb R}
\def\N{\mathbb N}
\def\EXP{\mathbb{E}}
\title{Polychromatic Coloring of Tuples in Hypergraphs}
\author{
Ahmad Biniaz\thanks{School of Computer Science, University of Windsor, Canada, \texttt{abiniaz@uwindsor.ca}. Research supported in part by NSERC.}
\and Jean-Lou De Carufel\thanks{School of Electrical Engineering and Computer Science, University of Ottawa, Ottawa, Canada, \texttt{jdecaruf@uottawa.ca}. Research supported in part by NSERC.} 
\and  Anil Maheshwari\thanks{School of Computer Science, Carleton University, Ottawa, Canada, \texttt{anil@scs.carleton.ca}. Research supported in part by NSERC.}
\and Michiel Smid\thanks{School of Computer Science, Carleton University, Ottawa, Canada, \texttt{michiel@scs.carleton.ca}. Research supported in part by NSERC.}
\and Shakhar Smorodinsky\thanks{Department of Computer Science, Ben-Gurion University of the Negev, Be'er Sheva, Israel, \texttt{shakhar@bgu.ac.il}. Partially supported by Grant 1065/20 from the Israel Science Foundation, by the United States – Israel Binational Science Foundation (NSF-BSF grant no. 2022792) and by ERC grant no.~882971 "GeoScape" and by the Erd{\H o}s Center.} 
\and  Milo\v{s} Stojakovi\'{c}\thanks{Department of Mathematics and Informatics, University of Novi Sad, Novi Sad, Serbia, \texttt{milosst@dmi.uns.ar.rs}. Partly supported by Ministry of Science, Technological Development and Innovation of Republic of Serbia (Grants 451-03-137/2025-03/200125 \& 451-03-136/2025-03/200125). Partly supported by Provincial Secretariat for Higher Education and Scientific Research, Province of Vojvodina (Grant No.~142-451-2686/2021).}
 }
\date{}
\newtheorem{lemma}{Lemma}
\newtheorem{corollary}{Corollary}
\newtheorem{theorem}{Theorem}
\newtheorem*{problem*}{Problem}
\newtheorem*{claim*}{Claim}
\newtheorem*{invariant*}{Invariant}
\newtheorem{definition}{Definition}
\definecolor{mycolor}{rgb}{0, 0, 0}
\begin{document}
\maketitle
\begin{abstract}
A hypergraph $H$ consists of a set $V$ of vertices and a set $E$ of hyperedges that are subsets of $V$. A $t$-\emph{tuple} of $H$ is a subset of $t$ vertices of $V$. A $t$-tuple $k$-\emph{coloring} of $H$ is a mapping of its $t$-tuples into $k$ colors. A coloring is called $(t,k,f)$-\emph{polychromatic} if each hyperedge of $E$ that has at least $f$ vertices contains tuples of all the $k$ colors. Let $f_H(t,k)$ be the minimum $f$ such that $H$ has a $(t,k,f)$-polychromatic coloring. For a family of hypergraphs $\mathcal{H}$ let $f_{\Hyp}(t,k)$ be the maximum $f_H(t,k)$ over all hypergraphs $H$ in $\Hyp$.
Determining $f_{\Hyp}(t,k)$ has been an active research direction in recent years.
This is challenging even for $t=1$. We present several new results in this direction for $t\ge 2$.
\begin{itemize}
    \item Let $\Hyp$ be the family of hypergraphs $H$ that is obtained by taking any set $P$ of points in $\Re^2$, setting $V:=P$ and $E:=\{d\cap P\colon d\text{ is a disk in }\Re^2\}$. We prove that $f_\Hyp(2,k)\le 3.7^k$, that is, the pairs of points (2-tuples) can be $k$-colored such that any disk containing at least $3.7^k$ points has pairs of all colors. We generalize this result to points and balls in higher dimensions.
    \item For the family $\Hyp$ of hypergraphs that are defined by grid vertices and axis-parallel rectangles in the plane, we show that $f_{\Hyp}(2,k)\leq \sqrt{ck\ln k}$ for some constant $c$. We then generalize this to higher dimensions, to other shapes, and to tuples of larger size.
    \item     
    For the family $\Hyp$ of shrinkable hypergraphs of VC-dimension at most $d$ we prove that $   f_\Hyp(d{+}1,k) \leq c^k$ for some constant $c=c(d)$. Towards this bound, we obtain a result of independent interest: Every hypergraph with $n$ vertices and with VC-dimension at most $d$ has a $(d{+}1)$-tuple $T$ of depth at least $\frac{n}{c}$, i.e., any hyperedge that contains $T$ also contains $\frac{n}{c}$ other vertices. We also present analogous bounds for coloring pairs of points with respect to pseudo-disks in the plane.
    \item For the relationship between $t$-tuple coloring and vertex coloring in any hypergraph $H$ we establish the inequality $\frac{1}{e}\cdot tk^{\frac{1}{t}}\le f_H(t,k)\le f_H(1,tk^{\frac{1}{t}})$. For the special case of $k=2$, referred to as the bichromatic coloring, we prove that $t+1\le f_H(t,2)\le\max\{f_H(1,2), t+1\}$; this improves upon the previous best known upper bound.

    \item We study the relationship between tuple coloring and epsilon nets. In particular we show that if $f_H(1,k)=O(k)$ for a hypergraph $H$ with $n$ vertices, then for any $0<\epsilon<1$ the $t$-tuples of $H$ can be partitioned into $\Omega\left((\frac{\epsilon n}{t})^t\right)$ $\epsilon$-$t$-nets. This bound is tight when $t$ is a constant. 
\end{itemize}
\end{abstract}

\section{Introduction}
Polychromatic coloring and cover-decomposition of hypergraphs have rich backgrounds. They have been studied for both abstract and geometric hypergraphs as early as 1980 (see, e.g.,~\cite{Pach-1980,Pach1986,pach2005indecomposable,pach2007decomposition}) and are still active research areas (see, e.g.,~\cite{AckKesPalvo2021,AckermanKV17,BollobasPRS13,CardinalKMPUV23,CardinalKMU14,Planken2024}). 
Perhaps the study of these topics gained more attention after a question of Pach~\cite{Pach-1980} in~1980:\footnote{This is a dual form of the original question that is stated in terms of $f$-folds and coverings; see \cite{PachPalvolgyi}.} 

\begin{minipage}
{.9\columnwidth}
\vspace{8pt}\emph{Does there exist a constant $f$ such that every finite set of points in the plane can be colored with two colors in such a way that any unit disk that contains $f$ points contains points of both colors?}\vspace{8pt}
\end{minipage} 

\noindent This question initiated the study of polychromatic coloring in geometric hypergraphs. 
Surprisingly enough, the answer to the question turned out to be negative, as it was proved in 2016 by Pach and P{\'a}lv{\"o}lgyi \cite{PachPalvolgyi}. In particular, they showed that for any natural number $f$, there is a finite set $P$ of points in the plane such that for any $2$-coloring of $P$, one can find a unit disk containing at least $f$ points of one color {\color{mycolor} and no point of the other color}. The same question can be asked for other convex bodies and for more than two colors: Given a convex body $C$ and any natural number $k\ge 2$, is there a function $f \colon \N \mapsto \N$ such that any finite point set in the plane can be $k$-colored such that any translate of $C$ that contains at least $f(k)$ points, contains points of all the $k$ colors? For instance,
P{\'{a}}lv{\"{o}}lgyi and T{\'{o}}th \cite{Palv-Toth-2010} proved that $f(k)$ exists when $C$ is a convex polygon---it was shown later~\cite{Aloupis2010,Gibson-Varadarajan-2011} that $f(k)=O(k)$.
The negative answer of \cite{PachPalvolgyi} to the original question of Pach implies that $f(2)$ does not exist if $C$ is a disk. 
More results and details in this direction can be found in the survey article by Pach, P{\'{a}}lv{\"{o}}lgyi and T{\'{o}}th~\cite{pach2013survey} and the webpage maintained by Keszegh and P{\'a}lv{\"o}lgyi~\cite{Zoo}.

The containment of points in disks (and other convex bodies) can be represented by geometric hypergraphs, also known as range spaces. A \emph{hypergraph} $H=(V,E)$ consists of a set $V$ of vertices and a set $E$ of hyperedges that are subsets of $V$.  A (possibly infinite) set of hypergraphs is called a \emph{family} of hypergraphs.
Let $f_\Hyp(k)$ be the smallest natural number such that the vertices of any hypergraph $H$ in $\Hyp$ can be $k$-colored such that any hyperedge of $H$ that has at least $f_\Hyp(k)$ vertices contains points of all colors. If no such number exists, then $f_{\Hyp}(k):=\infty$. If $f_\Hyp(2)$ exists, then $\Hyp$ is called \emph{cover-decomposable}---a term that Pach introduced for $k=2$.

Determining $f_\Hyp(k)$ is an active research direction in both abstract and geometric hypergraphs; see e.g.~\cite{AckermanKV17,BollobasPRS13,CardinalKMPUV23,CardinalKMU13,CardinalKMU14,ChekanU22} and the references therein. Geometric hypergraphs, in particular, have received special attention. These hypergraphs are typically defined by points and convex bodies in some fixed constant dimension. For instance, let $C$ be a convex body in $\Re^2$ and let $\Hyp$ be the family of all hypergraphs $H$ that is obtained by taking any set $P$ of points in $\Re^2$, setting $V:=P$ and $E:=\{C'\cap P\colon C'\text{ is a translate of }C\}$. As mentioned above, if $C$ is a disk then $f_{\Hyp}(2)$ does not exist and hence $\Hyp$ is not cover-decomposable \cite{PachPalvolgyi}, and if $C$ is a convex polygon then $f_{\Hyp}(2)$ exists and $\Hyp$ is cover-decomposable \cite{Palv-Toth-2010}. Finding $f_\Hyp(k)$ could be more challenging if we could scale or rotate the convex body. For example if $C$ is the family of axis-parallel squares then $f_{\Hyp}(2)\le 215$~\cite{AckermanKV17}, and if $C$ is the family of axis-parallel rectangles then $f_\Hyp(2)$ does not exist~\cite{Chen2009}.

A natural question thus arises: For hypergraphs that are not cover-decomposable with respect to the coloring of vertices, such as those defined by disks and axis-parallel rectangles, can we prove a positive result when we color subsets rather then the vertices?  This leads to a natural generalization of the problem: color pairs, triples, or tuples of points. This generalization is referred to as {\em tuple coloring}.  
Let $H=(V,E)$ be a hypergraph. For a natural number $t$, a $t$-\emph{tuple} of $H$ is a subset of $t$ vertices of $V$. The number of $t$-tuples of $H$ is ${|V|\choose t}$. A $t$-tuple $k$-\emph{coloring} of $H$ is mapping of its $t$-tuples into $k$ colors. 
For $t=1$, this is equivalent to vertex coloring.
A coloring is called $(t,k,f)$-\emph{polychromatic} if each hyperedge of $E$ that has at least $f$ vertices contains tuples of all the $k$ colors. 
Let $f_H(t,k)$ denote the least integer $f$ such that $H$ has a $(t,k,f)$-polychromatic coloring. For a hypergraph family $\mathcal{H}$ let $f_{\Hyp}(t,k)$ denote the maximum $f_H(t,k)$ over all hypergraphs $H$ in $\Hyp$. In other words, $f_{\Hyp}(t,k)$ is the minimum number for which every hypergraph in $\mathcal{H}$ has a $(t,k,f_{\Hyp}(t,k))$-polychromatic coloring. 
For the standard vertex coloring, where $t=1$, the value $f_{\Hyp}(1,k)$ is equal to $f_{\Hyp}(k)$ as defined above. With this definition, $\Hyp$ is cover-decomposable if $f_\Hyp(1,2)$ exists. The family $\Hyp$ is called $t$\emph{-cover-decomposable}, for $t\ge 2$, if $f_\Hyp(t,2)$ exists.

\subsection{Other related results}
There is a rich literature for vertex-coloring (i.e.~when $t=1$) of geometric hypergraphs. In these hypergraphs, the vertices are usually defined by points and the hyperedges are defined by geometric shapes  such as disks, rectangles, squares, bottomless rectangles, strips, quadrants, triangles, octants, and halfplanes; see, for example  \cite{AckermanKV17,Aloupis2011,Aloupis2010,Asinowski2013,CardinalKMPUV23,ChekanU22,Chen2009,Gibson-Varadarajan-2011,KeszeghP15,PachPalvolgyi,Palv-Toth-2010} and a recent article by Planken and Ueckerdt~\cite{Planken2024} that gives a summary of the results.

For $t\ge 2$, coloring $t$-tuples in hypergraphs is closely related to Ramsey-type problems (see \cite{Mubayi2020} for Ramsey problems in hypergraphs). In particular, it follows from Ramsey's theorem that for any two natural numbers $t$ and $d$, where $t\leq d$, the family of complete $d$-uniform hypergraphs is not $t$-cover-decomposable, i.e., $f_\Hyp(t,2)$ is unbounded.

A recent work of Ackerman, Keszegh, and P{\'a}lv{\"o}lgyi \cite{AckKesPalvo2021} is devoted to determining $f_\Hyp(t,k)$ for $t\ge 2$ in geometric hypergraphs that are defined by halfspaces, pseudo-disks, and boxes. In particular, for the family $\Hyp$ of hypergraphs that are defined by points and axis-parallel boxes in dimension $d\ge 2$ they show that $f_\Hyp(t,k)\le k^{2^{d-1}}+t-1$ when $t\ge 2$ (to better appreciate this bound we shall recall that $f_\Hyp(t,k)$ is unbounded when $t=1$, as shown in \cite{Chen2009}). From this result it follows that for the family $\Hyp$ of hypergraphs that are defined by points and homothets\footnote{A homothet of an object is obtained by translating and scaling the object.} of a fixed convex polytope with $h$ facets in some dimension it holds that  $f_\Hyp(t,k)\le k^{2^{h-1}}+t-1$ \cite{AckKesPalvo2021}. 

\subsection{Preliminaries}
When the family or hypergraphs or a particular hypergraph is clear from the context we may drop the subscript and simply write $f(t,k)$ for $f_\Hyp(t,k)$ and $f_H(t,k)$. First we define the notion of shrinkability which is a common property of many geometric hypergraphs.

\begin{definition}[Shrinkability]\label{def:shrinkability}
    A hypergraph $H$ is shrinkable if for every hyperedge $e\in H$ and for every integer $1 \leq i \leq |e|$ there exists a hyperedge $e'$ in $H$ such that $e' \subseteq  e$ and $|e'|=i$.
\end{definition}

The VC-dimension of a hypergraph is a measure of its complexity, and it plays a central role in statistical learning, computational geometry, and other areas of computer science and
combinatorics (see, e.g.,~\cite{AHW87,BEHW89,Matousek04,MV18}).
\begin{definition}[VC-dimension]
     The VC-dimension of a hypergraph $H=(V,E)$ is the size of the largest subset $S\subseteq V$ such that for every subset $B \subseteq S$ there is a hyperedge $e \in E$ where $e \cap S=B$. Such a subset $S$ is said to be ``shattered'' by $H$. 
\end{definition}

The following inequality, usually referred to as the Sauer-Shelah-Perles lemma \cite{Buzaglo2013,Sauer1972,Shelah1972}, relates the number of vertices and hyperedges of a hypergraph of VC-dimension $d$: \[|E| \leq \sum_{i=0}^{d} {\binom{|V|}{i}}.\]

For a hypergraph $H=(V,E)$ and a subset $X\subseteq V$ the {\em projection} of $H$ to $X$ is the hypergraph $H[X]=(X,\{e\cap X\colon e\in E\})$. The VC-dimension of $H[X]$ is, at most, the VC-dimension of $H$ because any subset that is shattered by $H[X]$ is also shattered by $H$.

Let $P$ be a set of points in the plane. The {\em depth} of a pair $\{x,y\}$ of points in $P$, denoted by $d_P(x,y)$, is defined as the maximum integer $i$ such that any disk containing $x$ and $y$ contains at least $i$ other points of $P$. Notice that $0 \leq d_P(x,y) \leq |P|-2$. The existence of pairs of large depth (also known as deep pairs) is studied in \cite{BaranySchmerlSidneyUrrutia:PointsBallsEuclideanSpace-1989,EdelsbrunnerHasanSeidelShen:CirclesEncloseManyPoints-1989,Hayward:CircleContainmentProblem-89,HaywardRappaportWenger:ExtremalResultsCirclesContainingPoints-1989,Neumann-LaraU88,Ramos2009}. The notion of depth can be generalized to tuples of points and to higher dimensions as the maximum number $i$ such that any ball containing a deep tuple also contains $i$ other points. We extend the notion of depth further into hypergraphs. 

\begin{definition}[Depth]
   \label{def:depth-bounded-VC}
    Let $H=(V,E)$ be a hypergraph and $S$ be a subset of $V$. The depth $d_H(S)$ of $S$ is defined as the maximum integer such that for every hyperedge $e\in E$ that contains $S$, we have $|e{\setminus}S|\ge d_H(S)$. If no hyperedge contains $S$ then $d_H(S):= |V{\setminus}S|.$
    \end{definition}
Note that if $S$ has depth $i$, then in particular every hyperedge that contains $S$ contains at least $i$ other vertices.
The notion of depth is monotone with respect to projections, in the sense that for any $X\subseteq V$, we have $d_{H[X]}(S) \leq d_H(S)$. 

Let $H$ be a hypergraph and $0<\epsilon<1$ be a real number. A subset $S$ of vertices is called an \emph{$\epsilon$-net} for $H$ if any hyperedge $e$  of size at least $\epsilon |V|$ contains a vertex from $S$. This notion was introduced by Haussler and Welzl~\cite{hw-ensrq-87}. 
It was then generalized by Mustafa and Ray~\cite{MR17} to tuples and appears under \emph{$\epsilon$-$t$-nets}; see also~\cite{AJKSY22,DGJM19}.

\begin{definition}[$\epsilon$-$t$-Net]
	Let 
	$H=(V,E)$ be a hypergraph, $\epsilon \in (0,1)$ be a real number, and $t$ be a natural number. A subset $N$ of $t$-tuples of $V$ is called an $\epsilon$-$t$-net if any hyperedge $e$ in $E$ of size at least $\epsilon |V|$ contains at least one of the $t$-tuples in $N$.
\end{definition}


\subsection{Our contributions}
We present upper bounds on $f(t,k)$ for several families of geometric hypergraphs and hypergraphs of bounded VC-dimension. Some of our bounds are new, and some improve over previously known bounds. 
These bounds are interesting because, for abstract hypergraphs, $f(t,k)$ is unbounded even if the hypergraph is uniform and $k=2$ (due to Ramsey's theorem; see \cite{Mubayi2020}). 
We also study the relationship between vertex-coloring, tuple-coloring, cover-decomposability, and epsilon $t$-nets.

Recall from the result of \cite{PachPalvolgyi} that $f(t,2)$, with $t=1$, is unbounded for hypergraphs that are defined by points and disks in the plane. We prove, however, that it is bounded for $t=2$. {\color{mycolor}A point set $P$ in dimension $d$ is said to be in {\em general position} if no $(d+2)$ points of $P$ lie on a sphere.}
\begin{theorem}
    \label{thm:disks-plane}
Let $\Hyp$ be the family that contains any hypergraph $H$ that can be obtained by taking a finite set $P$ of points in $\Re^2$ {\color{mycolor}in general position} setting $V(H):=P$ and $E(H):=\{d\cap P:\text{$d$ is a disk in $\Re^2$}\}$. Then for any natural number $k$ it holds that $f_{\Hyp}(2,k)\leq 3.7^{k}$.
\end{theorem}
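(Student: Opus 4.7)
I'll prove the theorem by induction on $k$. For the base case $k=1$, any disk containing at least $2 \le 3.7$ points of $P$ automatically contains a pair, so the trivial colouring with a single colour works.

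The key ingredient for the inductive step will be a quantitative first-selection-type \emph{deep pair} lemma for disks: for every finite point set $Q \subseteq \Re^2$ in general position with $|Q|=m \ge 2$, there exists a pair $\{p,q\}\in\binom{Q}{2}$ with $d_Q(p,q)\ge (m-2)/3.7$; equivalently, every disk containing both $p$ and $q$ contains at least $(m-2)/3.7$ other points of $Q$. This is a classical deep-pair statement for disks in the plane (descending from B\'ar\'any--Schmerl--Sidney--Urrutia with later improvements), and the specific constant $3.7$ is precisely what drives the final bound. Note the monotonicity of depth under supersets: if $Q\subseteq P$ then $d_P(p,q) \ge d_Q(p,q)$, so a deep pair found inside a subset stays deep in the ambient set.

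Given the deep-pair lemma, my plan is to establish the recursion $f_\Hyp(2,k) \le 3.7\cdot f_\Hyp(2,k-1)$, which together with the base case yields $f_\Hyp(2,k)\le 3.7^k$. Concretely, for the inductive step I would, given a point set $P$ of size $n \ge 3.7^k$, exhibit a subset $S\subseteq P$ of size roughly $n/3.7$ satisfying the two density conditions: (i) every disk $D$ with $|D\cap P|\ge 3.7^k$ has $|D\cap S|\ge 3.7^{k-1}$, and (ii) every such $D$ also has $|D\cap(P\setminus S)|\ge 2$. With such an $S$ in hand, I would invoke the inductive hypothesis to colour $\binom{S}{2}$ with $k-1$ colours so that every disk with at least $3.7^{k-1}$ points of $S$ is rainbow, and then assign colour $k$ to every pair having at least one endpoint in $P\setminus S$. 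By (i), a disk $D$ with $|D\cap P|\ge 3.7^k$ sees colours $1,\dots,k-1$ through its many pairs inside $S$; by (ii), the same disk contains at least one pair with an endpoint in $P\setminus S$ and therefore also sees colour $k$.

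The main obstacle, which I expect to be the hardest step, is producing a subset $S$ that simultaneously satisfies the density conditions (i) and (ii). The deep-pair lemma alone only yields a single deep pair, not a large subset, so $S$ has to be built iteratively. The natural plan is a peeling argument: repeatedly apply the deep-pair lemma to the current active set, push one endpoint of the chosen deep pair into the accumulating set $P\setminus S$, and continue; the depth guarantees of the successive deep pairs should translate into the required two-sided density of $S$ and $P\setminus S$ inside every large disk. An alternative is to realise $S$ as a relative $\epsilon$-approximation of $P$ with respect to disks, leveraging the bounded VC-dimension of disks. In either route the delicate point is that the constants must match exactly, since a careless implementation would only yield $(3.7+\epsilon)^k$ or $C\cdot 3.7^k$; tuning the peeling or approximation against the constant $3.7$ of the deep-pair lemma is where the technical work lies.
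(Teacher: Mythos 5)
Your proposal has a genuine gap, and in fact two of its ingredients are problematic. First, the deep-pair lemma you assume --- that every $m$-point set contains a pair of depth at least $(m-2)/3.7$ --- is false. The known result (Ramos and Via\~{n}a, which the paper uses) gives a pair such that every circle through it has at least $m/4.7$ points both inside \emph{and} outside; the conjectured optimal constant is $4$, and constructions of Hayward, Rappaport and Wenger show one cannot do better than roughly $m/4$. So no deep-pair lemma with constant $3.7<4$ can hold. The constant $3.7$ in the theorem does not come from a depth lower bound at all: it is the ratio $3.7/4.7$ between the \emph{upper} and \emph{lower} bounds on the depth of the Ramos--Via\~{n}a pair (at least $m/4.7$ inside, hence at most $m-m/4.7-2$ inside). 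Your plan uses only a one-sided depth bound, which by the paper's own remark would yield at best $4.7^k$ even if everything else worked.

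Second, the step you yourself identify as the hardest --- producing $S$ with $|D\cap S|\ge |D\cap P|/3.7$ for \emph{every} large disk $D$ while $P\setminus S$ still meets every such disk --- is not carried out, and neither proposed route delivers it with the exact constants: relative approximations and random sampling inherently incur a multiplicative $(1\pm\epsilon)$ loss, compounding to $((3.7+\epsilon))^k$ or worse over $k$ levels of recursion, and the peeling sketch gives no mechanism by which a sequence of single deep pairs yields the required two-sided density in every disk. The paper avoids recursion entirely: it colors each pair directly by the logarithm (base $3.7$) of its depth $d_P(x,y)$, and then, given a disk with $3.7^k$ points, uses a shrinking lemma to extract nested subdisks $D_i$ with $4.7\cdot 3.7^i$ points each; inside $D_i$ the Ramos--Via\~{n}a pair has depth pinned between $3.7^i$ and $3.7^{i+1}$ (the upper bound transferred from the subset to all of $P$ via a disk-through-two-points lemma), so it receives exactly color $i+1$. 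Without the two-sided depth control and the shrinking argument, your recursion cannot be closed as stated.
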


In other words, for any finite set $P$ of points in the plane, the $\binom{|P|}{2}$ pairs of points can be $k$-colored such that any disk containing at least $3.7^k$ points must also contain a pair (of points) from each of the $k$ color classes. To obtain this bound, we show some properties of containment of points in disks and also employ a result about the depth of pairs of points---a well-studied topic in discrete and computational geometry~\cite{EdelsbrunnerHasanSeidelShen:CirclesEncloseManyPoints-1989,Ramos2009}. We then generalize this result to tuples and balls in higher dimensions. The parameter $e$ below denotes Euler’s number.

\begin{theorem}
    \label{point-ball-thr}
Let $d\ge 3$ be an integer and set $t:=\lfloor\frac{d+3}{2}\rfloor$. Let $\Hyp$ be the family of all hypergraphs $H$ that can be obtained by taking a finite set $P$ of points in $\Re^d$ {\color{mycolor}in general position} setting $V(H):=P$ and $E(H):=\{b\cap P:\text{$b$ is a ball in $\Re^d$}\}$. Then for any natural number $k$ it holds that $f_{\Hyp}(t,k)\leq \left(\frac{4}{5ed^3}\right)^k$.
\end{theorem}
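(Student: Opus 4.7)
The plan is to lift the proof of Theorem~\ref{thm:disks-plane} from pairs/disks in the plane to $t$-tuples/balls in $\Re^d$, preserving its two-layer structure: a geometric lemma producing a deep $t$-tuple, then an inductive $k$-colouring built around it. For Step~1 (the deep-tuple lemma) I would prove that for every set $P$ of $n$ points in general position in $\Re^d$ there is a $t$-tuple $T\subseteq P$ whose ball-depth is at least $\beta_d\cdot n$, where $t=\lfloor(d+3)/2\rfloor$ and $\beta_d$ is an explicit constant matching the base of the exponent in the theorem. A natural approach is to use the paraboloid lifting $x\mapsto(x,\|x\|^2)$ from $\Re^d$ to $\Re^{d+1}$ (which turns balls into halfspaces) and then apply a First-Selection-Lemma / centerpoint-type argument on the lifted points to obtain a deep $t$-tuple with respect to halfspaces. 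The specific value $t=\lfloor(d+3)/2\rfloor$ reflects the fact that a generic sphere in $\Re^d$ is supported by $d+1$ points, and a balanced split of those $d+1$ points across a separating hyperplane places at least $\lfloor(d+3)/2\rfloor$ of them on one side; this is also the number that makes the first-selection averaging yield a linear-in-$n$ depth bound.

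For Step~2 (the recursive $k$-colouring), setting $C:=1/\beta_d$ I would carry out exactly the induction on $k$ used in the proof of Theorem~\ref{thm:disks-plane}, with $C$ in the role of $3.7$ and $t$-tuples in the role of pairs. The base case $k=1$ is trivial. In the inductive step on an $n$-point set $P$: if $n<C^k$ any colouring works; otherwise I would iteratively extract deep $t$-tuples of depth $\geq n/C$, assign them colour $k$, and then apply the inductive hypothesis with $k{-}1$ colours to the residual colouring. Any ball of size $\geq C^k$ contains a colour-$k$ tuple by the depth bound, and since it also has size $\geq C^{k-1}$, it contains tuples of every colour $1,\dots,k{-}1$ by the inductive hypothesis.

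The bulk of the technical work lives in Step~1: producing a $\lfloor(d+3)/2\rfloor$-tuple of \emph{linear} depth with an \emph{explicit} constant $\beta_d$ that is only polynomially small in $d$, so that the final bound is single-exponential in $k$ with a base polynomial in $d$. The planar case admits short order-$k$ Voronoi / Delaunay-type arguments (\cite{EdelsbrunnerHasanSeidelShen:CirclesEncloseManyPoints-1989,Ramos2009}) that do not generalise directly. Controlling the multiplicative loss in the lifting plus first-selection step, and extracting the ``$1/e$'' averaging factor that gives Euler's number its role in the bound, are the most delicate places; once Step~1 is in hand, Step~2 should be a routine adaptation of the planar argument.
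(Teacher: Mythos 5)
Your high-level architecture (a deep-tuple lemma followed by a depth-driven colouring) is the same as the paper's, but both halves of your plan have genuine gaps.

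\textbf{Step 1.} The paper does not prove the deep-tuple lemma; it invokes a known result of Smorodinsky, Sulovsk\'y and Wagner \cite{SSW08} (Lemma~\ref{ball-depth-lemma}), which states precisely that some $\lfloor\frac{d+3}{2}\rfloor$-tuple $S$ has the property that \emph{every} ball containing $S$ contains at least $\frac{4n}{5ed^3}$ points. Your plan to re-derive this via the paraboloid lifting plus ``a First-Selection-Lemma / centerpoint-type argument'' is not a proof and, as stated, uses the wrong tool: the First Selection Lemma produces a \emph{point} covered by many simplices, whereas what is needed is a \emph{tuple of input points} such that every halfspace (after lifting) containing the whole tuple contains $\Omega(n)$ lifted points. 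That is a different, and harder, statement; it is exactly the content of \cite{SSW08} (improving B\'ar\'any et al.), and the specific parameters $t=\lfloor\frac{d+3}{2}\rfloor$ and $\frac{4}{5ed^3}$ come out of that proof, not out of a generic lifting-plus-averaging argument. You correctly identify this as ``the bulk of the technical work,'' but you do not supply it, and without it the claimed base of the exponent is unsupported.

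\textbf{Step 2.} You describe the colouring as an induction in which you ``iteratively extract deep $t$-tuples of depth $\ge n/C$, assign them colour $k$, and then apply the inductive hypothesis with $k-1$ colours to the residual colouring.'' This is not how the proof of Theorem~\ref{thm:disks-plane} works (that proof is a direct, non-recursive assignment of colour $i+1$ to every tuple whose depth lies in $[C^i,C^{i+1})$), and as literally stated it fails: the deep-tuple lemma produces \emph{one} tuple per point set, and there is no reason an arbitrary ball with $C^k$ points should contain any of the finitely many tuples you ``extracted'' globally. The correct argument colours \emph{every} tuple by its depth band and then, for a given ball $D$ and a given colour $i+1$, uses the ball-shrinking lemma (Lemma~\ref{ball-shrinking1}, which your proposal omits) to pass to a sub-ball $B_i\subseteq D$ with about $C^{i+1}$ points, applies Lemma~\ref{ball-depth-lemma} to $B_i\cap P$ to get a tuple of depth $\ge C^i$ there, uses monotonicity of depth to transfer the lower bound to $P$, and uses $B_i$ itself as the witness that the depth is below $C^{i+1}$ (this last point replaces the two-sided bound of Corollary~\ref{Ramos-cor}, which is unavailable in higher dimensions). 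Your write-up addresses neither the need for a colouring rule that is intrinsic to each tuple nor the upper-bound witness, so the claim ``any ball of size $\ge C^k$ contains a colour-$k$ tuple by the depth bound'' does not follow from what you have set up.
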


For points and axis-parallel rectangles in the plane, we know from \cite{Chen2009} that $f(t,2)$ is unbounded when $t=1$. However for $t=2$ a result of~\cite{AckKesPalvo2021} implies that $f(2,k)\le k^{2}+1$. We study the same problem but for grid points.

\begin{theorem}
    \label{grid-rectangle-thr}
Let $\Hyp$ be the family of all hypergraphs $H$ that can be obtained by taking the vertex set  $P$ of a regular grid in $\Re^2$ setting $V(H){:=}P$ and $E(H){:=}\{r\cap P:r$ is an axis-parallel rectangle in $\Re^2\}$. Then for any natural number $k\ge 2$ it holds that $f_{\Hyp}(2,k)\leq \sqrt{ck\ln k}$, for some constant $c\ge 126$.
\end{theorem}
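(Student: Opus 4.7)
The plan is to use a periodic random coloring. Fix $f := \lceil\sqrt{ck\ln k}\rceil$ and set the period $T := f$. Sample, independently and uniformly, a color in $[k]$ for each unordered pair (and each singleton, for ``self-pairs'') of residues in $(\mathbb{Z}/T\mathbb{Z})^2$, and define the color of a pair $\{p,q\}$ of grid points to be the color of the residue pair $\{p \bmod T,\, q \bmod T\}$. The aim is to show that, with positive probability, every axis-parallel rectangle containing at least $f$ grid points has pairs of all $k$ colors.

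Two reductions limit the union bound. Periodicity of the coloring implies that two rectangles differing by a translation in $T\mathbb{Z}^2$ induce the same multiset of residue pairs, hence the same failure event, so only $T^2 = f^2$ translates per shape are distinct. Monotonicity under containment then restricts us to the Pareto-minimal shapes $(a,b)$ satisfying $ab \geq f$, $(a-1)b < f$, and $a(b-1) < f$; a divisor-type count shows that every such shape has $\min(a,b) \leq \lceil\sqrt{f}\rceil$, giving $O(\sqrt{f})$ shapes and $O(f^{5/2})$ failure events in total.

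The key quantitative step is to lower-bound the number $M_R$ of distinct residue pair-classes represented in a Pareto-minimal rectangle $R$ of dimensions $a \times b$. If $a,b \leq T$, the $ab$ grid points of $R$ project to distinct residues, so $M_R = \binom{ab}{2} \geq \binom{f}{2}$; if $a > T$ (and symmetrically for $b$), then a single row of $R$ already spans a full period of $T = f$ distinct column-residues, yielding $M_R \geq \binom{f}{2}$ as well. Thus $M_R \geq f^2/4$ uniformly over Pareto-minimal rectangles.

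The probabilistic method then closes the argument: since the $M_R$ residue classes are colored independently, the probability that a fixed color is absent from $R$ is at most $(1-1/k)^{M_R} \leq e^{-f^2/(4k)}$, and the union bound over $k$ colors and $O(f^{5/2})$ events gives total failure probability $O(f^{5/2} k^{1-c/4})$ after substituting $f^2 = ck\ln k$. Choosing $c$ large enough makes this less than $1$, and tracking the $(\ln k)^{O(1)}$ and small-$k$ corrections uniformly in $k \geq 2$ yields the stated constant $c \geq 126$. The main obstacle is the lower bound on $M_R$ in the thin-rectangle regime ($a = 1$ or $b = 1$): a purely shape-based, translation-invariant coloring is known to yield only an $O(k\log k)$ bound, and the period $T = f$ is calibrated precisely so that even a $1 \times f$ rectangle wraps one full period and spans $f$ distinct residues, restoring the quadratic $\binom{f}{2}$ count that produces the square-root improvement over the vertex-coloring bound.
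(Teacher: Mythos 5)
Your argument is correct, and it reaches the stated bound by a genuinely different route from the paper. The paper colors \emph{every} pair of grid points i.i.d.\ uniformly with one of the $k$ colors and then, because the grid may be arbitrarily large, cannot use a union bound; instead it first shrinks every heavy rectangle to one containing between $m$ and $2m$ points (so that each rectangle fits in a $2m\times 2m$ square and its bad event depends on at most $6^4m^4$ others) and closes the argument with the Lov\'asz Local Lemma, invoking Moser--Tardos to make the coloring constructive. You sidestep the Local Lemma entirely: by making the random coloring $T$-periodic with $T=f$, the infinitely many bad events collapse to $O(f^{5/2})$ genuinely distinct ones (Pareto-minimal shapes times translates modulo $T\mathbb{Z}^2$), and a plain union bound suffices. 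The one point that needs care in your scheme --- and you correctly identify and handle it --- is that a periodic coloring could in principle destroy independence among the pairs inside a single rectangle; since every Pareto-minimal shape satisfies $a,b\le f=T$ (indeed $(a-1)b<f$ forces $a\le f$), all $ab\ge f$ points of such a rectangle have distinct residues, so its $\binom{ab}{2}\ge\binom{f}{2}$ pairs land in distinct, independently colored residue classes, and the computation $O(f^{5/2}k^{1-c/4})<1$ goes through comfortably for $c=126$ (your case $a>T$ is in fact vacuous for minimal shapes, but including it does no harm). What each approach buys: yours yields an explicitly periodic coloring found by searching a finite probability space of size $k^{O(f^4)}$ and needs no algorithmic LLL; the paper's is shorter to state and its dependency-degree bookkeeping transfers verbatim to the higher-dimensional Theorem~\ref{grid-box-thr}, though your periodic trick would generalize there as well.
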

To better appreciate this bound, observe that $f(2,k)=\Omega(\sqrt{k})$ for any hypergraph (see also Section~\ref{relationship-section}). We generalize this result to boxes and balls in any fixed dimension $d\ge 2$.

\begin{theorem}
    \label{grid-box-thr}
Let $\Hyp$ be the family of all hypergraphs that can be obtained by grid points and boxes (or balls) in any fixed dimension $d\ge 2$. Then for any natural numbers $t\ge 1$ and $k\ge 2$ it holds that $f_{\Hyp}(t,k)=O\left(\left(k\ln k\right)^{1/t}\right)$.
\end{theorem}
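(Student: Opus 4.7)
I would prove Theorem~\ref{grid-box-thr} by extending the probabilistic argument underlying Theorem~\ref{grid-rectangle-thr} to arbitrary dimension $d\ge 2$ and tuple size $t\ge 1$. The core idea is a random coloring in which each $t$-tuple of grid points receives an independent uniform color from $[k]$. For a box with $m$ grid points the $\binom{m}{t}$ tuples inside yield a per-color failure probability of at most $(1-1/k)^{\binom{m}{t}}\le e^{-\binom{m}{t}/k}$, which is negligible once $m\ge C(k\ln k)^{1/t}$ via the coupon-collector rate.

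To convert this single-box estimate into a bound that is independent of the grid extent, I would work with a translation-periodic random coloring: fix a period $L$ of order $f$ itself, draw independent uniform colors for the $t$-tuples of the fundamental domain $[L]^d$, and extend to all of $\mathbb{Z}^d$ by $L$-periodicity in each coordinate. By periodicity, the union bound collapses to canonical events that live within a single period---at most $L^{2d}$ of them, parameterized by the box's position in the fundamental domain ($L^d$ choices) and by its folded dimensions ($L^d$ choices). For a box of dimensions $(w_1,\ldots,w_d)$ with $\prod_i w_i\ge f$, the effective folded volume $\prod_i\min(w_i,L)$ is at least $f$ as soon as $L\ge f$: either no coordinate exceeds $L$ and the folded volume equals $m\ge f$, or some coordinate does and the folded volume is at least $L\ge f$. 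Hence the per-event failure probability is at most $k\cdot e^{-\binom{f}{t}/k}$, and the overall failure probability is at most $L^{2d}\cdot k\cdot e^{-\binom{f}{t}/k}$. This is below~$1$ once $\binom{f}{t}/k>(2d+1)\ln L+\ln k$, which with $L=f$ solves self-consistently to $f=O((k\ln k)^{1/t})$ for a constant depending only on $d$ and $t$.

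For balls (and, more generally, for homothets of any convex body of bounded aspect ratio), I would reduce to the box case by inscribing an axis-parallel box of side length $2r/\sqrt{d}$ inside every ball of radius $r$; this inscribed box carries at least a $1/c_d$ fraction of the grid points of the ball for a dimension-dependent constant $c_d$. Hence a coloring that witnesses all $k$ colors in every box of at least $f$ grid points also does so in every ball of at least $c_d\cdot f$ grid points, preserving the $O((k\ln k)^{1/t})$ bound up to dimension-dependent constants. The main obstacle I anticipate is the case analysis for the folded effective-volume bound in highly anisotropic boxes, where the choice $L=f$ is precisely what prevents the folded volume from shrinking below~$f$ when one side length dominates the others. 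I expect this corner to be the most delicate part of the argument; the exponent $1/t$ itself is dictated by the coupon-collector rate inside a single box and is insensitive to the dimension.
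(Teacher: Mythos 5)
Your single-box estimate is exactly the one the paper uses (uniform random $k$-coloring of $t$-tuples, $\binom{m}{t}$ tuples in a box of $m$ points, coupon-collector rate giving $m=\Theta((k\ln k)^{1/t})$), but your globalization is genuinely different: the paper invokes the Lov\'asz Local Lemma, bounding the dependency degree by the number of boxes (resp.\ balls) meeting a $3m$-sided hypercube around the given one, namely $(3m)^{d^2}$ (resp.\ $(3m)^{d(d+1)}$), whereas you derandomize by making the coloring $L$-periodic with $L=f$ and taking a plain union bound over the $O(L^{2d})$ canonical events, handling long boxes by folding and handling balls by an inscribed-cube reduction to boxes. Both work because the relevant event count is polynomial in $m$ while the per-box failure probability is $k^{-\Omega(C^t/t!)}$, so in either case one just tunes the constant in $f$. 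Your route buys an explicit, finitely described (periodic) coloring without appealing to Moser--Tardos or to a compactness argument for the infinite grid, at the cost of the folding case analysis and the ball-to-box step; the paper's route treats balls directly and needs no geometric reduction. Two small points you should tighten: (i) ``independent colors for the $t$-tuples of the fundamental domain $[L]^d$, extended by periodicity'' is not quite right, since an orbit of a tuple under $L\mathbb{Z}^d$-translation need not have a representative inside $[L]^d$ (e.g.\ $\{3,12\}$ with $L=10$ in one dimension); one should instead color a set of orbit representatives, say tuples whose lexicographically smallest point lies in $[0,L)^d$ --- the independence of distinct tuples inside any box of extent less than $L$ in every coordinate still holds, which is all you use. (ii) The claim that the inscribed cube captures a $1/c_d$ fraction of a ball's grid points fails for balls of radius below roughly $\sqrt{d}$, but such balls contain only $O_d(1)$ grid points, so this is absorbed into the constant in the final bound; it is worth saying so.
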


We also study hypergraphs of bounded VC-dimension.
\begin{theorem}
    \label{thm:poly-boundedVC}
    For every shrinkable hypergraph $H$ with VC-dimension {\color{mycolor}at least 2 and} at most $d$ and with $n\ge 2d+2$ vertices it holds that $f_H(d+1,k)\le c^{k-1}$ for any $k\ge 1$ and any $c\ge 4(d+1)^{d+1}$.
\end{theorem}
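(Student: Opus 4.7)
The plan is induction on $k$, with the main engine being the $(d{+}1)$-tuple depth lemma announced in the abstract: every hypergraph on $n$ vertices with VC-dimension at most $d$ contains a $(d{+}1)$-tuple of depth at least $n/c$, where $c=4(d+1)^{d+1}$. The $k=1$ base case is immediate, as assigning the single color to every $(d{+}1)$-tuple makes each hyperedge of size at least $d+1$ polychromatic.

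For the inductive step, I would invoke the depth lemma iteratively to build a family $T_1, T_2, \ldots, T_\ell$ of \emph{deep} $(d{+}1)$-tuples, extracted from successive projections: set $V_0 := V$ and at step $j\ge 1$ let $T_j$ be a $(d{+}1)$-tuple of $H[V_{j-1}]$ whose depth in that projection is at least $|V_{j-1}|/c$, then set $V_j := V_{j-1}\setminus T_j$. Since VC-dimension is monotone under projection, each call to the depth lemma is valid. Once the iteration halts, assign color $k$ to every $(d{+}1)$-tuple of $H$ that meets the removed set $V\setminus V_\ell$, and invoke the inductive hypothesis on the smaller shrinkable hypergraph $H[V_\ell]$ to color the remaining tuples (those lying entirely inside $V_\ell$) with the other $k-1$ colors.

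To verify correctness, consider any hyperedge $e$ with $|e|\ge c^{k-1}$. If the iteration is halted as soon as $|V_\ell|< c^{k-1}$, then $e$ cannot fit inside $V_\ell$, so $e$ meets some $T_j$; completing the common vertex to a $(d{+}1)$-tuple inside $e$ (possible since $|e|\ge d+1$) produces a color-$k$ tuple in $e$. For the remaining colors $1,\ldots,k-1$ one needs $|e\cap V_\ell|\ge c^{k-2}$, so that the inductive hypothesis, applied to the shrinkable sub-hypergraph $H[V_\ell]$, guarantees tuples of every color $1,\ldots,k-1$ inside the projected hyperedge $e\cap V_\ell$.

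The hard part will be maintaining $|e\cap V_\ell|\ge c^{k-2}$ even when $n$ is much larger than $c^{k-1}$: a naive halting rule could strip out on the order of $n/(d{+}1)$ vertices and crush the intersection $e\cap V_\ell$. I would overcome this by sharpening the halting criterion using the depth guarantee itself — each $T_j$ has the property that any hyperedge of $H[V_{j-1}]$ containing it is already ``large,'' so the iteration needs to continue only until every hyperedge of size at least $c^{k-1}$ in $H$ has been hit by some $T_j$. The number of such hyperedges is controlled by the Sauer--Shelah bound $|E|\le \sum_{i=0}^{d}\binom{n}{i}$, and the deep-tuple extraction eliminates enough of them per step to terminate after a number of iterations that depends only on $d$ and $c$. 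The value $c\ge 4(d+1)^{d+1}$ is precisely the balance point between the depth parameter $|V_{j-1}|/c$ and this Sauer--Shelah count, closing the induction with $f_H(d+1,k)\le c^{k-1}$.
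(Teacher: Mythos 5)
Your base case and your appeal to the depth lemma are in the right spirit, but the inductive scheme has a gap that your own ``hard part'' paragraph correctly identifies and then does not close. After the iteration halts with $|V_\ell|<c^{k-1}$, you have removed on the order of $n-c^{k-1}$ vertices, and nothing prevents a hyperedge $e$ with $|e|=c^{k-1}$ from losing almost all (or even all) of its vertices to the removed set; then $|e\cap V_\ell|\ge c^{k-2}$ fails outright and the inductive hypothesis gives you nothing for colors $1,\dots,k-1$. Your proposed repair does not work: the depth guarantee for $T_j$ constrains only hyperedges of $H[V_{j-1}]$ that contain \emph{all} $d+1$ vertices of $T_j$, not hyperedges that merely meet the removed set, and ``hitting'' every large hyperedge with some $T_j$ would at best certify a color-$k$ tuple --- it does nothing to keep $e\cap V_\ell$ large. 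In fact, as written, the deep tuples play no role in your verification at all: you color \emph{every} tuple meeting $V\setminus V_\ell$ with color $k$, so only the identity of the removed vertex set matters and the depth property is never used. A secondary issue is that you apply the inductive hypothesis to $H[V_\ell]$, which requires that projections of shrinkable hypergraphs be shrinkable; this is not immediate from the definition, since shrinking a hyperedge $e$ of $H$ to size $i$ need not shrink $e\cap V_\ell$ to size $i$.

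The paper's proof avoids the recursion entirely and works \emph{inside} each hyperedge. It colors a $(d{+}1)$-tuple $T$ directly by the scale of its depth: color $0$ if $d_H(T)=0$, color $i{+}1$ if $c^i\le d_H(T)<c^{i+1}$, and color $k{-}1$ if $d_H(T)\ge c^{k-2}$. Given $e$ with $|e|\ge c^{k-1}$, shrinkability yields for each $i$ a sub-hyperedge $e_i\subseteq e$ with exactly $c^{i+1}$ vertices; the depth lemma applied to the projection onto $e_i$ produces a tuple $T$ with $d_{H[e_i]}(T)\ge c^{i+1}/c=c^i$, this lower bound transfers to $H$ by monotonicity of depth under projection, and $e_i$ itself witnesses the upper bound $d_H(T)<c^{i+1}$ (it is a hyperedge of $H$ containing $T$ with at most $c^{i+1}-(d{+}1)$ other vertices). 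This two-sided pinning of the depth, for which your scheme has no analogue, is exactly what guarantees a tuple of every color inside $e$. If you want to keep an inductive flavor, the recursion must happen per hyperedge via shrinkability, not by globally deleting vertices.
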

This result applies to many families of geometric hypergraphs. For example, hypergraphs defined by points and halfspaces, boxes, or balls in $\Re^d$ are shrinkable and have bounded VC dimensions. It also applies to hypergraphs defined by points and pseudo-disks\footnote{A collection of simple closed Jordan regions for every pair of which their boundaries intersect at most twice.} in the plane because they are shrinkable \cite{Pin14} and have VC dimension at most 4 \cite{Aronov2021}.
Along with a proof of this result, we show the existence of a tuple of large depth in hypergraphs of bounded VC-dimension, which is of independent interest. 

\begin{theorem}
\label{lem:depth-boundedVC}
For every hypergraph $H$ with VC-dimension {\color{mycolor}at least 2 and} at most $d$ and with $n\ge 2d+2$ vertices  
there exists a $(d+1)$-tuple $T$ of vertices such that $d_H(T) \geq \frac{n}{c}$, for {\color{mycolor} $c = 4(d+1)^{d+1}$}.
\end{theorem}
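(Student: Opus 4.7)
The plan is to identify a $(d+1)$-tuple avoiding every ``small'' hyperedge of $H$, and to observe that such a tuple automatically has depth at least $n/c$. Set $c := 4(d+1)^{d+1}$ and take $M := \lceil n/c \rceil + d$. Call a $(d+1)$-tuple $T$ \emph{good} if no hyperedge $e$ with $T \subseteq e$ satisfies $|e| \leq M$. A good $T$ has the claimed depth: if no hyperedge at all contains $T$ then $d_H(T) = n - (d+1) \geq n/c$ since $n \geq 2d+2$ and $c \geq 2$; otherwise every containing hyperedge has size at least $M+1$, and so $|e \setminus T| \geq M + 1 - (d+1) \geq n/c$.

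To exhibit a good tuple, I would count the ``bad'' $(d+1)$-tuples and show there are strictly fewer than $\binom{n}{d+1}$ of them. Only hyperedges of size in $[d+1, M]$ can contain a $(d+1)$-tuple at all (smaller ones are irrelevant), and each such $e$ contributes at most $\binom{|e|}{d+1} \leq \binom{M}{d+1}$ bad tuples. By the Sauer-Shelah-Perles lemma the number of hyperedges of $H$ is at most $\sum_{i=0}^{d}\binom{n}{i}$, so the number of bad $(d+1)$-tuples is at most $\binom{M}{d+1} \cdot \sum_{i=0}^{d}\binom{n}{i}$. A good tuple then exists provided
\[
\binom{M}{d+1} \cdot \sum_{i=0}^{d}\binom{n}{i} \;<\; \binom{n}{d+1}.
\]

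The main obstacle is pushing this inequality through with the stated constant $c = 4(d+1)^{d+1}$ independently of $n$. A fully naive application of Sauer-Shelah allows $\sum_{i=0}^{d}\binom{n}{i} = \Theta(n^d)$, which on its face seems to fall short of a linear-in-$n$ depth; overcoming this requires exploiting that only hyperedges of size at least $d+1$ create bad tuples (so a large contribution to the Sauer-Shelah budget from size-$\leq d$ hyperedges effectively caps the relevant count) together with the elementary estimates $\binom{M}{d+1} \leq (eM/(d+1))^{d+1}$ and $\binom{n}{d+1} \geq (n/(d+1))^{d+1}$. The factor $(d+1)^{d+1}$ in $c$ is calibrated precisely to absorb the residual factors of $e$, $d$, and $d+1$ that appear during this algebra, while the hypothesis $n \geq 2d+2$ keeps the boundary estimates (notably $n - d \geq n/2$ and $\binom{n}{\leq d} \leq 2\binom{n}{d}$) uniform; the hypothesis VC-dimension at least $2$ is needed only to exclude degenerate cases in which the statement is already trivial.
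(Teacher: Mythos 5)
Your reduction to finding a ``good'' tuple is fine, but the counting step that is supposed to produce one does not go through, and this is exactly where the real content of the theorem lies. With $M = \lceil n/c\rceil + d$, a single hyperedge of size $M$ contributes up to $\binom{M}{d+1} = \Theta\bigl((n/c)^{d+1}\bigr)$ bad tuples, and Sauer--Shelah only caps the number of hyperedges at $\Theta(n^{d})$; the union bound therefore gives up to $\Theta\bigl(n^{2d+1}/c^{d+1}\bigr)$ bad tuples, which is below $\binom{n}{d+1} = \Theta(n^{d+1})$ only when $c = \Omega\bigl(n^{d/(d+1)}\bigr)$ --- not a constant. Your proposed rescue (that size-$\leq d$ hyperedges eat into the Sauer--Shelah budget, plus sharper binomial estimates) does not touch the actual problem: the dangerous hyperedges are the $\Theta(n^{d})$ possible hyperedges of size $\Theta(n/c)$, and nothing in your argument prevents all of them from being present. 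No deterministic union bound of this shape can work.

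The paper closes this gap with a Clarkson--Shor style random sampling argument, which is the idea missing from your proposal. Let $k$ be the maximum depth of a $(d+1)$-tuple and fix for each tuple $T$ a witnessing hyperedge $e_T$ with $|e_T| = d_H(T) + (d+1)$. Sample each vertex independently with probability $p = 1/k$ to get $X$, and count the hyperedges of the projection $H[X]$ having exactly $d+1$ vertices. Sauer--Shelah applied to the \emph{sample} bounds their expected number by $(np)^{d}$, while each tuple $T$ appears among them with probability at least $p^{d+1}(1-p)^{d_H(T)} \geq p^{d+1}(1-p)^{k}$ (the event $e_T \cap X = T$), giving a lower bound of $p^{d+1}(1-p)^{k}\binom{n}{d+1}$. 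Comparing the two yields $\binom{n}{d+1} \leq 4kn^{d}$ and hence $k \geq n/\bigl(4(d+1)^{d+1}\bigr)$. The sampling is what deflates the overcounting that sinks your union bound: tuples sitting only in large hyperedges are exponentially unlikely to survive as $(d+1)$-element projected hyperedges, so the $n^{d}$-type Sauer--Shelah budget is spent only on the shallow tuples you are trying to rule out.
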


 We also present bounds, analogous to that of Theorem~\ref{thm:poly-boundedVC} and Theorem~\ref{lem:depth-boundedVC}, for coloring pairs of points with respect to pseudo-disks in the plane.

For hypergraphs where $f(1,x)$ is linear in $x$, we prove the following asymptotic tight bounds for their tuple-coloring and for decomposition of their tuples into epsilon $t$-nets. 

\begin{theorem}
    \label{cor:poly-well-behaved}
    Let $H$ be a hypergraph for which  $f_{H}(1,x)=O(x)$, for any $x\ge 1$. 
    Then for any natural numbers $t$ and $k$ with $tk^{1/t}\ge 1$ we have
    $
    f_{H}(t,k)= \Theta(tk^{\frac{1}{t}}).$
\end{theorem}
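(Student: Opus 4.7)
The plan is to derive this statement almost immediately from the general two-sided inequality stated earlier in the paper (the fourth bullet of the contributions), namely
\[
\tfrac{1}{e}\cdot t k^{1/t} \;\le\; f_H(t,k) \;\le\; f_H\!\left(1,\, t k^{1/t}\right).
\]
Once this is in hand, the hypothesis $f_H(1,x)=O(x)$ immediately converts the right-hand upper bound into $O(tk^{1/t})$, matching the universal lower bound $\tfrac{1}{e}\cdot tk^{1/t}$ up to constants. So the whole argument is an invocation of a previously established result plus a substitution.

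I would organize the proof as two short steps. First, for the lower bound, I would simply quote the left inequality from the earlier theorem, giving $f_H(t,k)\ge \tfrac{1}{e}\cdot tk^{1/t} = \Omega(tk^{1/t})$. This part requires no hypothesis on $H$. Second, for the upper bound, I would apply the right inequality with the argument $x = tk^{1/t}$ (interpreted as $\lceil tk^{1/t}\rceil$, which is legitimate since $f_H(1,\cdot)$ is monotone non-decreasing and the theorem hypothesizes $tk^{1/t}\ge 1$). By the hypothesis $f_H(1,x)=O(x)$ there is a constant $C$ (depending only on $H$) such that
\[
f_H(t,k)\;\le\; f_H\!\left(1,\lceil tk^{1/t}\rceil\right) \;\le\; C\cdot \lceil tk^{1/t}\rceil \;=\; O(tk^{1/t}).
\]
Combining the two bounds gives $f_H(t,k)=\Theta(tk^{1/t})$.

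There is essentially no obstacle here; the only minor subtleties are (i) handling the fact that $tk^{1/t}$ need not be an integer, which is resolved by the monotonicity of $f_H(1,\cdot)$ and the ceiling, and (ii) being explicit that the hidden constant in the conclusion depends on the constant hidden in the hypothesis $f_H(1,x)=O(x)$ (but not on $t$ or $k$). If the paper's convention is to allow $f_H(1,\cdot)$ to take real arguments via rounding, even this is unnecessary. Thus the theorem is best presented as a corollary of the general inequality, with a one-line substitution argument rather than a fresh combinatorial construction.
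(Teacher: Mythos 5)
Your proposal is correct and follows exactly the paper's own route: the lower bound is inequality~\eqref{lowerbound} and the upper bound is Lemma~\ref{thm:poly-well-behaved} combined with the linearity hypothesis, with the substitution $x:=tk^{1/t}$. Your extra remarks on rounding and on the dependence of the hidden constant are harmless refinements of the same one-line argument.
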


\begin{theorem}
    \label{thm:tnet-decomposition}
    Let $H$ be a hypergraph with $n$ vertices such that $f_{H}(1,x)=O(x)$, for any $x\ge1$. Then for any $0<\epsilon < 1$ and any $t\ge 1$ the $t$-tuples of vertices of $H$ can be decomposed into $\Omega\left((\frac{\epsilon n}{t})^t\right)$ pairwise disjoint $\epsilon$-$t$-nets for $H$. This bound is asymptotically tight when $t$ is a constant.
\end{theorem}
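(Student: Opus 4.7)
The plan is to convert the vertex-coloring hypothesis into a tuple decomposition via a ``product of colour classes'' construction. Let $c$ be the constant in $f_H(1,x)\le cx$, and set $x:=\lfloor\epsilon n/c\rfloor$. Applying the hypothesis gives a partition $V=V_1\cup\dots\cup V_x$ such that every hyperedge of size at least $\epsilon n$ meets every $V_i$.

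For each $t$-subset $S=\{i_1,\dots,i_t\}$ of $\{1,\dots,x\}$, let $N_S$ denote the set of all $t$-tuples that contain exactly one vertex from each of $V_{i_1},\dots,V_{i_t}$. These ``rainbow'' families $\{N_S\}$ are pairwise disjoint because each rainbow tuple determines a unique set of $t$ colours. Moreover, any hyperedge $e$ of size at least $\epsilon n$ contains a vertex in each $V_{i_j}$ and therefore contains a tuple of $N_S$; hence every $N_S$ is an $\epsilon$-$t$-net. The $t$-tuples that repeat colours are not yet placed, and I would assign each of them arbitrarily to some $N_S$. Adding tuples to an $\epsilon$-$t$-net preserves the net property, so the result is a partition of $\binom{V}{t}$ into $\binom{x}{t}\ge (x/t)^{t}=\Omega\bigl((\epsilon n/t)^{t}\bigr)$ pairwise disjoint $\epsilon$-$t$-nets.

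For tightness when $t$ is a constant, I would use the interval hypergraph $H^{\star}$ on $n$ collinear points whose hyperedges are all intervals. A cyclic colouring of the points shows $f_{H^{\star}}(1,x)=x$, so the hypothesis is satisfied. Fix $\lfloor 1/\epsilon\rfloor$ pairwise disjoint intervals $e_1,\dots,e_{\lfloor 1/\epsilon\rfloor}$, each of length $\lfloor\epsilon n\rfloor$. Every $\epsilon$-$t$-net in a decomposition must contain at least one $t$-tuple lying entirely inside $e_1$, and distinct nets require distinct such tuples. Since $e_1$ contains only $\binom{\lfloor\epsilon n\rfloor}{t}=O\bigl((\epsilon n)^{t}\bigr)$ many $t$-tuples, no decomposition can use more than $O\bigl((\epsilon n)^{t}\bigr)$ nets, matching the lower bound whenever $t$ is constant.

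The main technical subtlety is the inequality $\binom{x}{t}\ge(x/t)^{t}$, which holds whenever $x\ge t$ and follows by bounding each factor of the product $\prod_{i=0}^{t-1}(x-i)/(t-i)$ from below by $x/t$. In the degenerate regime $\epsilon n<ct$ the target $(\epsilon n/t)^{t}$ is a constant, so the trivial decomposition with a single net containing all $t$-tuples already satisfies the claim. The distribution of the non-rainbow $t$-tuples poses no genuine obstacle, because the $\epsilon$-$t$-net property is monotone under enlargement.
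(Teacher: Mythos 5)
Your proof is correct and follows essentially the same route as the paper: your ``rainbow'' product-of-colour-classes construction is exactly the content of the paper's Lemma~\ref{thm:poly-well-behaved}, which the paper then invokes to extract $k'=\Omega\bigl((\tfrac{\epsilon n}{t})^t\bigr)$ disjoint nets, and your tightness argument is the same counting of $t$-tuples inside a hyperedge of size about $\epsilon n$ (you merely make it more self-contained by exhibiting the interval hypergraph as a concrete witness).
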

For an implication of Theorem~\ref{cor:poly-well-behaved} consider the family $\Hyp$ of hypergraphs that are defined by points and halfplanes in $\Re^2$. From a result of  Smorodinsky and Yuditsky~\cite{smorodinsky2012polychromatic} we know that $f_{\Hyp}(1,x)\le 2x{-}1$, i.e., any set of points in the plane can be colored by $x$ colors such that any halfplane containing at least $2x {-} 1$ points contains points of all colors. Therefore, Theorem~\ref{cor:poly-well-behaved} implies that $f_{\Hyp}(t,k)= \Theta(tk^{\frac{1}{t}})$.

Finally, we study the relationship between $f(1,k)$ and $f(t,k)$ for $k=2$, which is usually referred to as bichromatic coloring. It is implied from a result of Ackerman, Keszegh, and P{\'a}lv{\"o}lgyi \cite[Proposition 4]{AckKesPalvo2021} that $f_H=(t,2)\le\max\{f_H(1,2), 2t-1\}$ for any family $H$ and any $t\ge 2$. We present the following improved upper bound.
\begin{theorem}
\label{thm:cover-decomposable-vs-tuples}
    For any hypergraph $H$ and any integer $t\ge 2$ it holds that\[t+1\le f_H(t,2)\le\max\{f_H(1,2), t+1\}.\]
\end{theorem}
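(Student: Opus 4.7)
The plan is to prove the lower bound by an easy structural observation and to handle the upper bound by lifting a vertex $2$-coloring to a tuple $2$-coloring via a parity trick.

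For the lower bound $f_H(t,2)\ge t+1$, I would consider a hyperedge $e$ of $H$ with $|e|\le t$: either $e$ contains no $t$-tuple at all, or, when $|e|=t$, it contains exactly one $t$-tuple, which receives a single color. In either case $e$ fails to contain tuples of both colors, so no $(t,2,f)$-polychromatic coloring can exist for $f\le t$.

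For the upper bound, set $f_1:=f_H(1,2)$ and let $\chi:V\to\{0,1\}$ be a vertex $2$-coloring witnessing $f_1$, so that every hyperedge of size at least $f_1$ contains vertices of both colors. The key idea is to define a tuple coloring by the parity
\[
\psi(T)\;:=\;\Bigl(\sum_{v\in T}\chi(v)\Bigr)\bmod 2.
\]
I then need to show that $\psi$ is $(t,2,\max\{f_1,t+1\})$-polychromatic. Fix any hyperedge $e$ with $|e|\ge\max\{f_1,t+1\}$, and let $r$ and $b$ denote the number of color-$0$ and color-$1$ vertices in $e$, respectively. Because $|e|\ge f_1$, both $r\ge 1$ and $b\ge 1$, and because $|e|\ge t+1$ we have $r+b\ge t+1$. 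A $t$-tuple $T\subseteq e$ is determined up to $\psi$ by the number $j$ of color-$1$ vertices it contains, with $j$ ranging over the interval $[\max(0,t-r),\min(t,b)]$; and $\psi(T)=j\bmod 2$. It therefore suffices to show this interval has length at least $1$, so that it contains two consecutive integers of opposite parity.

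The case analysis is short. If $r\ge t$ and $b\ge t$, the range is $[0,t]$ (both parities present for $t\ge 1$). If $r\ge t$ and $b<t$, the range is $[0,b]$ and $b\ge 1$ yields $\{0,1\}\subseteq[0,b]$. The case $r<t$, $b\ge t$ is symmetric, giving $\{t-1,t\}$ in the range. Finally, if $r<t$ and $b<t$, the range $[t-r,b]$ has length $b-(t-r)=r+b-t\ge 1$. Thus in every case both parities occur among the $t$-tuples of $e$, completing the proof. The main (minor) obstacle is recognizing that naive ``lift'' schemes such as coloring by an extremal or majority vertex can fail when the vertex colors in $e$ are badly unbalanced; the parity coloring handles all imbalance regimes uniformly, which is exactly what lets the bound improve from $2t-1$ to $t+1$.
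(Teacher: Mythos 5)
Your proposal is correct and matches the paper's argument: both use the parity of the number of blue (color-$1$) vertices in a $t$-tuple as the two-coloring, and both lower bounds rest on the observation that an edge with at most $t$ vertices contains at most one $t$-tuple. The only cosmetic difference is that the paper exhibits two concrete tuples $t'\cup\{r\}$ and $t'\cup\{b\}$ of opposite parity rather than running your (equally valid) case analysis on the range of possible blue-counts.
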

The matching term $t+1$ in the lower and upper bounds of Theorem~\ref{thm:cover-decomposable-vs-tuples} determines the exact value of $f(t,2)$ for some classes of hypergraphs. For instance, when $H$ is defined by points and halfplanes in $\Re^2$, the result of Smorodinsky and Yuditsky~\cite{smorodinsky2012polychromatic} gives $f_H(1,2)=3$, and thus Theorem~\ref{thm:cover-decomposable-vs-tuples} implies the following corollary.

\begin{corollary}
\label{thm:half-planes}
Let $t\ge 2$ be an integer. For the family $\Hyp$ of hypergraphs that are defined by points and halfplanes in $\Re^2$, it holds that
$f_\Hyp(t,2)=t+1$.
\end{corollary}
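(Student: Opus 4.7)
The plan is to apply Theorem~\ref{thm:cover-decomposable-vs-tuples} as a black box, once the value of $f_\Hyp(1,2)$ for halfplane hypergraphs is pinned down. From the Smorodinsky--Yuditsky bound $f_\Hyp(1,x)\le 2x-1$~\cite{smorodinsky2012polychromatic}, specialized to $x=2$, we get $f_\Hyp(1,2)\le 3$; this is in fact tight, as three collinear points cannot be $2$-colored so that every halfplane containing at least two of them is polychromatic (two of the three receive the same color, and some halfplane cuts off exactly that pair). Extending Theorem~\ref{thm:cover-decomposable-vs-tuples} from single hypergraphs to the family level by taking the maximum over all $H\in\Hyp$ then yields
\[
f_\Hyp(t,2)\;\le\;\max\{f_\Hyp(1,2),\,t+1\}\;\le\;\max\{3,\,t+1\}\;=\;t+1,
\]
where the last equality uses the hypothesis $t\ge 2$.

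For the matching lower bound, $t+1\le f_\Hyp(t,2)$ is precisely the universal lower bound supplied by Theorem~\ref{thm:cover-decomposable-vs-tuples}; it is realized inside the geometric family $\Hyp$ by the simple witness of $t$ collinear points. The unique $t$-tuple of such a configuration receives a single color, while a halfplane can be drawn to contain exactly these $t$ points, so any valid threshold $f$ must strictly exceed $t$. There is no real obstacle to overcome once Theorem~\ref{thm:cover-decomposable-vs-tuples} is available; the point of the corollary is that the improvement of the prior bound $\max\{f_H(1,2),\,2t-1\}$ of~\cite{AckKesPalvo2021} to $\max\{f_H(1,2),\,t+1\}$ is exactly sharp enough, in combination with the value $f_\Hyp(1,2)=3$, to collapse the upper and lower bounds and pin down $f_\Hyp(t,2)$ exactly at $t+1$.
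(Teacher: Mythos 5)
Your proposal is correct and follows essentially the same route as the paper: plug the Smorodinsky--Yuditsky bound $f_\Hyp(1,2)\le 3$ into Theorem~\ref{thm:cover-decomposable-vs-tuples} to get the upper bound $\max\{3,t+1\}=t+1$ for $t\ge 2$, and realize the lower bound $t+1$ by a halfplane cutting off exactly $t$ points. One aside is factually wrong but immaterial: three \emph{collinear} points do not witness $f_\Hyp(1,2)\ge 3$, since coloring the middle point opposite to the two outer ones makes every halfplane with at least two of them polychromatic (no halfplane contains the two outer points without the middle one); the correct witness is three points in convex position, and in any case only the upper bound $f_\Hyp(1,2)\le 3$ is needed for the corollary.
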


\section{Coloring Tuples of Points with respect to Disks and Balls} 
In this section, we prove Theorem~\ref{thm:disks-plane} and Theorem~\ref{point-ball-thr}. Our proofs relies on some simple structural properties of the containment of points in disks and balls. They also employ some results about the existence of deep tuples in point sets. {\color{mycolor}Without loss of generality, all point sets that are considered in this section are assumed to be in general position.}

\subsection{Points and disks in $\Re^2$}

Let $\Hyp$ be the family of hypergraphs defined by points and disks in the plane. That is, $\Hyp$ contains any hypergraph $H$ that can be obtained by taking a finite set $P$ of points in $\Re^2$ setting $V(H):=P$ and $E(H):=\{d\cap P:\text{$d$ is a disk in $\Re^2$}\}$. In this section we prove that $f_{\Hyp}(2,k)\le 3.7^k$. 
We say that a disk $d$ {\em contains} a point $p$ if $p$ is in the interior or on the boundary of $d$. Alternatively, we say that $p$ is in $d$.

One part of our proof employs the following result of Ramos and Via\~{n}a \cite{Ramos2009}; this is a stronger version of a result that previously appeared in \cite{EdelsbrunnerHasanSeidelShen:CirclesEncloseManyPoints-1989}.

\begin{lemma}[Ramos and Via\~{n}a \cite{Ramos2009}]
\label{Ramos-lemma}
In any set of $n$ points in the plane, there is a pair of points such that any circle through them has, both inside and outside, at least $\frac{n}{4.7}$
 points (excluding the two points themselves). 
\end{lemma}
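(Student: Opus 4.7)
The plan is to use the standard paraboloid lifting to convert the statement about circles in $\Re^2$ into one about planes in $\Re^3$, and then apply a ``deep edge'' argument to the lifted point set. Define $\phi:\Re^2\to\Re^3$ by $\phi(x,y)=(x,y,x^2+y^2)$ and set $P':=\phi(P)$. Under $\phi$, every circle $C\subset\Re^2$ corresponds to a unique non-vertical plane $h_C\subset\Re^3$ (the affine hull of the lifts of any three points on $C$), and a point $p\in P$ lies strictly inside $C$ if and only if $\phi(p)$ lies strictly below $h_C$, and strictly outside $C$ if and only if $\phi(p)$ lies strictly above $h_C$. Consequently the lemma is equivalent to the following statement in $\Re^3$: for any set $P'$ of $n$ points in general position on the paraboloid, there exists a pair $\{p',q'\}\subset P'$ such that every plane through $p'$ and $q'$ leaves at least $n/4.7$ other points of $P'$ strictly on each side.

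To attack this deep-edge statement, I would parameterize, for each candidate pair $\{p',q'\}$, the pencil of planes through the line $\ell_{p'q'}$ by a rotation angle $\theta\in[0,\pi)$. As $\theta$ rotates, every remaining point $r'\in P'\setminus\{p',q'\}$ crosses the rotating plane exactly once, giving a natural linear ordering of these points by their ``flip angles''. Call the pair $\{p',q'\}$ \emph{$k$-shallow} if some plane through it leaves fewer than $k$ points strictly on one side; pulling back through $\phi$, this is the same as saying that some circle in $\Re^2$ through $p$ and $q$ contains fewer than $k$ points of $P\setminus\{p,q\}$ inside, or fewer than $k$ outside. The goal therefore reduces to exhibiting a pair that is not $\lceil n/4.7\rceil$-shallow.

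To rule out the possibility that every pair is $k$-shallow, I would appeal to a global counting of $j$\emph{-circles} of $P$, defined as circles through two points of $P$ with exactly $j$ points of $P$ strictly inside. Using the Clarkson--Shor probabilistic technique, together with the known linear complexity $O(kn)$ of the order-$k$ Voronoi diagram of $P$, one derives a linear-in-$k$ upper bound of the form $N_{\le k}\le\alpha kn+O(n)$ on the number of circles with at most $k$ interior points, and symmetrically for the exterior count. Summing these two bounds over all ``shallow'' values of $j$ and comparing with the total number $\binom{n}{2}$ of pairs shows that if $k<n/4.7$, then not every pair can be $k$-shallow, which forces the existence of the desired pair.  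The main obstacle, and the technical crux of \cite{Ramos2009}, is sharpening the constants in the counting inequalities enough to extract the explicit value $4.7$; this requires a careful bookkeeping of the extremal $j$-circles passing through a fixed pair, beyond the coarser analysis that yields the weaker constant of \cite{EdelsbrunnerHasanSeidelShen:CirclesEncloseManyPoints-1989}.
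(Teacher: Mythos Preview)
The paper does not prove this lemma; it quotes it from \cite{Ramos2009} and adds only the remark that the result is obtained via the circle--plane duality together with known bounds on facets of point sets in $\Re^3$. Your outline is precisely an elaboration of that hint: the paraboloid lift is the duality in question, and the Clarkson--Shor/$(\le k)$-level machinery is a standard route to the facet bounds, so the approaches coincide.

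One imprecision to fix: your $j$-circles, as defined (circles through \emph{two} points of $P$ with $j$ interior points), form a one-parameter family for each pair and hence cannot be counted. The finite combinatorial objects one actually bounds are circles through \emph{three} points of $P$, i.e., $j$-facets of the lifted set $P'$; the passage from a facet bound $N_{\le k}=O(kn)$ back to a bound on the number of $k$-shallow \emph{pairs} needs a short extra step (each shallow pair has a witnessing $(\le k)$-facet, and each facet serves at most three pairs). With that correction your sketch matches the strategy of \cite{EdelsbrunnerHasanSeidelShen:CirclesEncloseManyPoints-1989,Ramos2009}, and you correctly identify that extracting the constant $4.7$ is where \cite{Ramos2009} does the real technical work.
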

This result is obtained by using the duality (of circles in $\Re^2$ and planes in $\Re^3$) and some known results on facets of points in $\Re^3$. The pair that is obtained by Lemma~\ref{Ramos-lemma} has the property that any circle through them has at least $\frac{n}{4.7}$ and at most $n{-}\frac{n}{4.7}{-}2$ points inside. Combining this with the notion of depth, we get the following corollary.

\begin{corollary}
\label{Ramos-cor}
    In any set of $n$ points in the plane  there is a pair $\{x,y\}$ such that \[\frac{n}{4.7}\le d_P(x,y)<\frac{3.7n}{4.7}.\]
\end{corollary}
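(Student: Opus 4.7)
The plan is to apply Lemma~\ref{Ramos-lemma} to obtain a pair $\{x,y\} \subseteq P$ such that every circle through $x$ and $y$ has at least $n/4.7$ points of $P$ strictly inside and at least $n/4.7$ points of $P$ strictly outside (in both counts, $x$ and $y$ themselves are excluded). The task is then to translate this purely geometric statement into the two required bounds on $d_P(x,y)$, using the standing assumption (stated at the start of Section~2) that $P$ is in general position, so no four points of $P$ are cocircular.

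For the lower bound $d_P(x,y) \ge n/4.7$, I would let $D$ be an arbitrary disk containing both $x$ and $y$ and produce a sub-disk $D' \subseteq D$ with $x,y \in \partial D'$. This can be done by continuously shrinking $D$ (for instance, decreasing the radius while moving the center) until both $x$ and $y$ lie on the boundary; since $D$ already contains $x$ and $y$, such a sub-disk exists. The boundary $\partial D'$ is a circle through $\{x,y\}$, so by Lemma~\ref{Ramos-lemma} at least $n/4.7$ points of $P \setminus \{x,y\}$ lie strictly inside $\partial D'$, and these same points lie in $D' \subseteq D$. Hence $D$ contains at least $n/4.7$ points of $P \setminus \{x,y\}$, and taking the minimum over all such $D$ gives the lower bound.

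For the upper bound $d_P(x,y) < 3.7n/4.7$, it suffices to exhibit one disk containing $\{x,y\}$ with few additional points. I would simply pick any disk $D$ having $x$ and $y$ on its boundary. By general position, every point of $P \setminus \{x,y\}$ lies strictly inside or strictly outside $\partial D$, and Lemma~\ref{Ramos-lemma} guarantees at least $n/4.7$ points strictly outside. Consequently the number of points of $P \setminus \{x,y\}$ inside $D$ is at most $(n-2) - n/4.7 = 3.7n/4.7 - 2$, so $d_P(x,y) \le 3.7n/4.7 - 2 < 3.7n/4.7$.

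The only step requiring any care is the shrinking argument in the lower bound, where one must verify that a sub-disk with $x,y$ on its boundary actually exists inside a given containing disk; otherwise the proof is a direct bookkeeping exercise atop Lemma~\ref{Ramos-lemma}.
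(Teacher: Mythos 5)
Your proof is correct and follows essentially the same route as the paper: apply Lemma~\ref{Ramos-lemma} and translate the circle statement into disk statements, using a shrinking argument (the paper's Lemma~\ref{two-point-disk-lemma}) to pass from an arbitrary containing disk to a circle through $x$ and $y$ for the lower bound, and a circle through $x,y$ as the witness for the upper bound. The only cosmetic slip is invoking general position to claim no third point lies on $\partial D$ (three cocircular points are in fact allowed), but the count goes through regardless since boundary points are simply absorbed into the ``at most $(n-2)-n/4.7$'' side.
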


The following lemmas, though very simple, play important roles in our proof. We denote the boundary circle of a disk $D$ by $\partial D$. 

\begin{figure}[ht!]
	\centering
\setlength{\tabcolsep}{0in}
$\begin{tabular}{cc}
	\multicolumn{1}{m{.5\columnwidth}}{\centering\vspace{0pt}\includegraphics[width=.23\columnwidth]{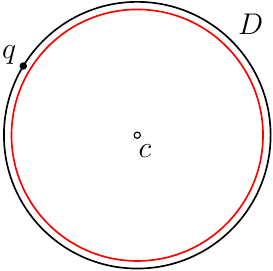}}
		&\multicolumn{1}{m{.5\columnwidth}}{\centering\vspace{0pt}\includegraphics[width=.26\columnwidth]{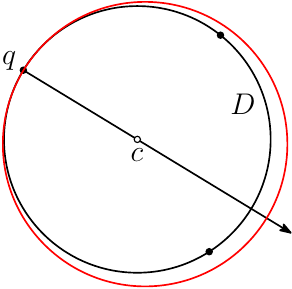}}
		\\
		(a)   &(b) 
\end{tabular}$	
	\caption{(a) Shrinking $D$ to lose $q$. (b) Enlarging $D$ along the ray from $q$ to $c$.}
\label{shrink-fig}
\end{figure}
\begin{lemma}
\label{lem:shrinking1}
    Let $P$ be a finite set of points in the plane and let $D$ be a disk with $|D\cap P| = j$ for some $j\ge 1$. There exists a disk $D'$ such that $(D'\cap P) \subseteq D$ and $|D' \cap P|=j-1$.
\end{lemma}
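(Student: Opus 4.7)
The plan is to continuously deform $D$ through a one-parameter family of sub-disks and stop just past the first moment at which a point of $P$ leaves. Writing $c$ and $r$ for the center and radius of $D$, I would fix a point $p_{\ast}$ in the interior of $D$ and consider
\[
D_t := B\bigl((1-t)c + t\,p_{\ast},\;(1-t)r\bigr),\qquad t\in[0,1],
\]
so that $D_0=D$ and $D_1=\{p_{\ast}\}$. A quick triangle-inequality calculation shows that $D_t\subseteq D$ throughout: for any $x\in D_t$ with $c_t := (1-t)c + t p_\ast$,
\[
|x-c|\le |x-c_t|+|c_t-c|\le (1-t)r+t|p_{\ast}-c|\le r.
\]
This nested containment is precisely what will yield $D_t\cap P\subseteq D\cap P$ for every $t$.

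Next, I would examine the individual trajectories of the points of $D\cap P$ through the family. For each $p_i\in D\cap P$, the membership condition $|p_i-c_t|^2\le (1-t)^2r^2$ is a quadratic inequality in $t$ whose leading coefficient $|p_{\ast}-c|^2-r^2$ is strictly negative (since $p_\ast$ is interior to $D$). Because $p_i\in D_0$ and $p_i\notin D_1$ (generically $p_i\ne p_\ast$), this quadratic has a well-defined "exit time" $t_i\in(0,1]$, the smallest $t$ at which $p_i$ leaves the family. For each pair $p_i\ne p_k$, the condition $t_i(p_{\ast})=t_k(p_{\ast})$ cuts out a proper algebraic subset of the interior of $D$, so almost every choice of $p_{\ast}$ makes the finitely many exit times $t_1,\dots,t_j$ pairwise distinct.

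To finish, let $t^{\ast}$ denote the smallest exit time and choose any $t$ strictly between $t^{\ast}$ and the next-smallest exit time. Then $D':=D_t$ satisfies $D'\subseteq D$ and $|D'\cap P|=j-1$, which is the conclusion of the lemma. The main (and only mildly subtle) step is the genericity argument guaranteeing distinct exit times; note that this is a statement about the auxiliary center $p_{\ast}$ and not about $P$, so it does not require invoking the general-position hypothesis on the input point set. Everything else is a one-line triangle inequality together with continuity of the quadratic, and no additional machinery is needed.
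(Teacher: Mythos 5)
Your deformation $D_t$ works cleanly only when every point of $D\cap P$ lies in the \emph{interior} of $D$, and the lemma (with the paper's convention that a disk contains its boundary points) does not grant you that. If $q\in\partial D\cap P$, then $g_q(t):=|q-c_t|^2-(1-t)^2r^2$ satisfies $g_q(0)=0$ and $g_q'(0)=2\bigl(r^2-\langle q-c,\,p_\ast-c\rangle\bigr)>0$ by Cauchy--Schwarz (since $|p_\ast-c|<r$), so $q$ leaves $D_t$ for every $t>0$; its ``exit time'' is the unattained infimum $0$, not an element of $(0,1]$ as you assert. Consequently, if two or more points of $P$ lie on $\partial D$, they all exit simultaneously at $t=0^{+}$ for \emph{every} choice of $p_\ast$, your genericity claim fails for that pair (the set $\{t_i=t_k\}$ is all of the interior of $D$, not a proper algebraic subset), and the count $|D_t\cap P|$ jumps from $j$ directly to at most $j-2$, never attaining $j-1$. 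Note that the paper's general-position hypothesis only forbids four concyclic points, so two or three points of $P$ on $\partial D$ is a live case that must be handled.

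This is exactly the case the paper's proof treats with an extra perturbation step: after a concentric shrink brings some $q\in P$ onto $\partial D$ (or if several points already lie there), it anchors the disk at one boundary point $q$ and enlarges slightly along the ray from $q$ toward the center, which pushes the other boundary points strictly inside while (by finiteness of $P$) admitting no new points; only then does it shrink to drop exactly $q$. You would need an analogous preprocessing step before launching your one-parameter family. With that addition your argument goes through; the remaining genericity claim for interior points is plausible but would also benefit from a word on why the resultant locus is proper (e.g., taking $p_\ast$ near $p_i$ separates $t_i$ from $t_k$). Aside from the boundary issue, your construction is a reasonable variant of the paper's continuous-deformation argument, trading the paper's explicit anchored perturbation for a generic choice of the shrinking target $p_\ast$.
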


\begin{proof}
Fix $D$ at its center $c$ and shrink it until a point $q\in P$ appears on $\partial D$ for the first time. If $q$ is the only point of $P$ on $\partial D$, we shrink $D$ slightly to lose $q$, as in Figure~\ref{shrink-fig}(a). Assume that there are other points of $P$ on $\partial D$. Fix $D$ at $q$ and enlarge it slightly along the ray from $q$ to $c$ such that all other points on $\partial D$ fall inside $D$ and no new point appears in $D$, as in Figure~\ref{shrink-fig}(b). Now, $q$ is the only point on $\partial D$. Again, we fix $D$ at the center of the new disk and shrink it slightly to lose $q$.
\end{proof}

By repeatedly applying Lemma~\ref{lem:shrinking1}, one can shrink/enlarge a disk $D$ to lose 1, 2, 3, or more points of $P$. Therefore, we get the following more general lemma.

\begin{lemma}
    \label{lem:shrinking}
    Let $P$ be a finite set of points and $D$ be a disk with $|D\cap P| = j$. For any integer $0 \leq i \leq j$ there exists a disk $D'$ such that $(D'\cap P) \subseteq D$ and $|D' \cap P|=i$.
\end{lemma}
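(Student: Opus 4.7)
The plan is to prove Lemma~\ref{lem:shrinking} by a straightforward induction on the quantity $j-i$, iterating Lemma~\ref{lem:shrinking1}. The only subtle point is verifying that the containment $D' \cap P \subseteq D$ is preserved through the iteration, since the procedure in Lemma~\ref{lem:shrinking1} briefly \emph{enlarges} the disk before shrinking it again, and so the final disk is not, as a region, contained in the starting disk.

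For the base case $i=j$, the disk $D'=D$ works. For the inductive step, assume the statement holds whenever the gap $j-i$ is smaller than some $\ell \ge 1$, and consider a disk $D$ with $|D\cap P|=j$ and a target $i=j-\ell$. Apply Lemma~\ref{lem:shrinking1} to $D$ to obtain a disk $D''$ with $|D'' \cap P| = j-1$ and $(D''\cap P)\subseteq D\cap P$. The hypergraph-level containment $(D''\cap P)\subseteq D\cap P$ is exactly what Lemma~\ref{lem:shrinking1} certifies (and is stronger than what the geometry shows: the enlarge-then-shrink step guarantees that no new point of $P$ is gained, only the chosen point $q$ is lost). Now apply the inductive hypothesis to $D''$ and the target value $i$, noting that $j-1-i = \ell-1 < \ell$: this produces a disk $D'$ with $|D'\cap P|=i$ and $(D'\cap P)\subseteq D''\cap P$. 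Composing the two containments yields $(D'\cap P)\subseteq D\cap P$, completing the induction.

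The main thing to watch out for is precisely the region-versus-intersection distinction above. The statement of the lemma is purely combinatorial ($D'\cap P\subseteq D$ is meant as $D'\cap P\subseteq D\cap P$), so one need not worry about whether the geometric disk $D'$ sits inside $D$; it suffices that the \emph{points of $P$} in $D'$ form a subset of those in $D$, and this is already established inside the proof of Lemma~\ref{lem:shrinking1} and preserved under composition. Once this is recognized, no further calculations are needed and the proof is a one-line induction.
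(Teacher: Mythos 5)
Your proof is correct and follows exactly the route the paper takes: the paper dispenses with Lemma~\ref{lem:shrinking} in one sentence by "repeatedly applying Lemma~\ref{lem:shrinking1}," which is precisely your induction on $j-i$. Your extra care about composing the containments $(D'\cap P)\subseteq D''$ and $(D''\cap P)\subseteq D$ (rather than geometric nesting of the disks) is a valid and worthwhile clarification, but not a departure from the paper's argument.
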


\begin{lemma}
    \label{two-point-disk-lemma}
    Let $D$ be a disk and $x,y$ be two points in the interior of $D$. There exists a disk that is contained in the interior of $D$ such that its boundary passes through $x$ and $y$.
\end{lemma}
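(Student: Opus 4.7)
My plan is to parameterise the candidate disks by their centres and reduce the lemma to a short triangle-inequality argument. Any disk whose boundary passes through both $x$ and $y$ has its centre on the perpendicular bisector $\ell$ of the segment $xy$, and if $c\in\ell$ is such a centre then the corresponding disk $D_c$ has radius $|cx|=|cy|$. Writing $O$ and $R$ for the centre and radius of $D$, the containment $D_c\subseteq\operatorname{int}(D)$ is equivalent to the single inequality
\[|Oc|+|cx|<R,\]
since every point of $D_c$ lies within distance $|cx|$ of $c$. So the lemma reduces to producing a point $c^*\in\ell$ with $|Oc^*|+|c^*x|<R$.

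The key observation is to apply the triangle inequality twice, using the symmetry $|cx|=|cy|$ on $\ell$. For any $c\in\ell$ one has $|Oc|+|cx|\ge|Ox|$ with equality iff $c$ lies on the segment $\overline{Ox}$; and likewise $|Oc|+|cx|=|Oc|+|cy|\ge|Oy|$ with equality iff $c$ lies on the segment $\overline{Oy}$. The plan is therefore to take $c^*$ to be the intersection of $\ell$ with whichever of the two segments $\overline{Ox},\overline{Oy}$ actually crosses $\ell$. A quick case analysis based on the position of $O$ relative to $\ell$ shows that one of them always does: if $O$ is strictly closer to $x$ than to $y$, then $O$ and $y$ lie on opposite sides of $\ell$, so $\overline{Oy}$ meets $\ell$ at some $c^*$ with $|Oc^*|+|c^*x|=|Oy|$; the case where $O$ is closer to $y$ is symmetric; and if $|Ox|=|Oy|$ then $O\in\ell$, so one may simply take $c^*=O$, yielding $|Oc^*|+|c^*x|=|Ox|$.

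In all three cases $|Oc^*|+|c^*x|\in\{|Ox|,|Oy|\}$, and both of these are strictly less than $R$ because $x$ and $y$ lie in the open interior of $D$; so $D_{c^*}\subseteq\operatorname{int}(D)$ while $x,y\in\partial D_{c^*}$. There is no real obstacle in this argument; the only point worth emphasising is the symmetry trick of applying the triangle inequality with respect to both $x$ and $y$ simultaneously, which guarantees that one of the segments $\overline{Ox},\overline{Oy}$ is always available as an equality case, no matter where $O$ sits relative to $\ell$.
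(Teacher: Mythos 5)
Your proof is correct, but it takes a genuinely different route from the paper's. The paper argues by continuous shrinking: first shrink $D$ concentrically until one of the two points, say $x$, reaches $\partial D$, then shrink further while keeping $x$ on the boundary (sliding the center along the segment from the old center toward $x$) until $y$ reaches the boundary as well. Your argument instead constructs the desired disk explicitly: you observe that every disk through $x$ and $y$ has its center on the perpendicular bisector $\ell$ of $xy$, reduce the containment $D_c\subseteq\operatorname{int}(D)$ to the single inequality $|Oc|+|cx|<R$, and then exhibit a center $c^*$ realizing equality in the triangle inequality against whichever of $|Ox|,|Oy|$ is available (the case analysis on which of $\overline{Ox},\overline{Oy}$ crosses $\ell$ is complete and correct, since $y$ always lies strictly on the $y$-side of $\ell$ and $O$ lies on the $x$-side exactly when $|Ox|<|Oy|$). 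What each approach buys: the paper's shrinking argument reuses the same deformation machinery as its Lemmas~\ref{lem:shrinking1} and~\ref{ball-shrinking1} and is very short, but it leans on an implicit continuity/intermediate-value argument; your construction is fully explicit, avoids any appeal to continuity, and generalizes verbatim to balls in $\Re^d$. The only (cosmetic) caveat is that both proofs implicitly assume $x\neq y$, which is the intended reading of ``two points.''
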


\begin{proof}
Fix $D$ at its center $c$ and shrink it until a point, say $x$, appears on $\partial D$ for the first time. Then fix $D$ at $x$ and shrink it along the segment $cx$ until $y$ appears on $\partial D$. This disk satisfies the constraints of the lemma.
\end{proof}

\begin{lemma}
    \label{lem:depth-monotonicity}
    Let $P$ be a finite set of points in the plane, and let $S$ be a subset of $P$.  Then for any pair $\{x,y\}$ in $S$ it holds that $d_S(x,y)\le d_P(x,y)$.
\end{lemma}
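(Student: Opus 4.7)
The plan is to exploit the fact that the family of disks in $\Re^2$ is a fixed ambient collection that does not depend on which point set we work with. Consequently, both $d_P(x,y)$ and $d_S(x,y)$ can be obtained by minimizing a certain count over exactly the same universe of geometric objects, namely those disks in the plane that contain both $x$ and $y$. Once this is observed, the desired inequality will follow from a pointwise comparison of these counts.

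More concretely, first I would note that for any disk $D$ with $\{x,y\}\subseteq D$, the inclusion $S\subseteq P$ immediately yields
\[(D\cap S)\setminus\{x,y\}\;\subseteq\;(D\cap P)\setminus\{x,y\},\]
and hence $|(D\cap S)\setminus\{x,y\}|\le|(D\cap P)\setminus\{x,y\}|$. Then I would recall from Definition~\ref{def:depth-bounded-VC} that $d_P(x,y)$ is the maximum integer that lower-bounds $|(D\cap P)\setminus\{x,y\}|$ uniformly over all disks $D$ through $x$ and $y$, with the analogous statement for $d_S(x,y)$. Taking the minimum of each side of the above inequality over this common family of disks preserves the inequality and produces $d_S(x,y)\le d_P(x,y)$.

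There is no real obstacle here: the lemma is essentially the general projection-monotonicity property for depth already observed immediately after Definition~\ref{def:depth-bounded-VC}, specialized to the concrete hypergraph of points and disks in the plane. The only subtlety worth flagging is that both depths are minima over the same ambient collection of disks, rather than over ``shifted'' collections that might a priori differ between $P$ and $S$; once this point is made explicit, the inequality drops out from monotonicity of intersection with respect to subsetting. I would also briefly note that every pair $\{x,y\}\subseteq S$ is contained in some disk, so the ``no hyperedge contains $S$'' clause of Definition~\ref{def:depth-bounded-VC} never triggers in this setting and the minimum-based expression of depth is always valid.
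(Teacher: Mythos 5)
Your proof is correct and is essentially the paper's argument: the paper's one-line proof likewise observes that for any disk $D$, the number of points of $P$ in $D$ is at least the number of points of $S$ in $D$, and the inequality between the two depths (each a minimum over the same family of disks through $x$ and $y$) follows. Your version just makes the pointwise comparison and the common indexing family explicit.
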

\begin{proof}
    {\color{mycolor}This is true because for any disk $D$ in the plane, the number of points of $P$ in $D$ is at least the number of points of $S$ in $D$.} 
\end{proof}

We are ready for the proof of Theorem~\ref{thm:disks-plane}.

\begin{proof}[Proof of Theorem~\ref{thm:disks-plane}]
Consider any hypergraph $H$ in $\Hyp$ with vertex set $P$. We color each pair $\{x,y\}$ of points in $P$ with one of the $k$ colors $\{0,1,\ldots,k-1\}$  as follows:
\[ \text{color of }\{x,y\} =
  \begin{cases}
        0   & \quad \text{if } d_P(x,y) = 0 \\
    i+1  & \quad \text{if } 3.7^i\le d_P(x,y) < 3.7^{i{+}1} \text{ for some }i \in \{0,1,\ldots,k-3\}\\
    k-1  & \quad \text{if } d_P(x,y) \ge 3.7^{k{-}2}.
  \end{cases}
\]
We prove that any disk $D$ containing at least $3.7^k$ points has pairs of all the  $k$ colors. 

By an application of Lemma~\ref{lem:shrinking} there is a disk $D_0$ such that $(D_0\cap P)\subseteq D$ and $|D_0\cap P|=2$. The two points in $D_0$ have depth 0 and thus are colored 0. 

\begin{figure}[ht!]
	\centering
\includegraphics[width=.28\columnwidth]{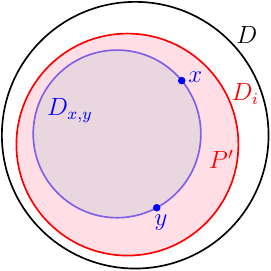}
	\caption{Illustration of the proof of Theorem~\ref{thm:disks-plane}.}
\label{depth-fig}
\end{figure}

For each $i\in\{0,\ldots,k-2\}$, we show that there is a pair of points with color $i{+}1$ in $D$. Let $D_i$ be a disk  such that $(D_i\cap P)\subseteq D$ and $|D_i\cap P|=4.7\cdot 3.7^i$ (such a disk exists by Lemma~\ref{lem:shrinking} and the fact that $4.7\cdot 3.7^i<3.7^k\le |D\cap P|$). Let $P'$ be the set of points in $D_i$. See Figure~\ref{depth-fig}. By Corollary~\ref{Ramos-cor} the set $P'$ contains a pair $\{x,y\}$ such that any disk through them has at least $\frac{|P'|}{4.7}=3.7^i$ and at most $\frac{3.7|P'|}{4.7}-2=3.7^{i{+}1}-2$ points of $P'$ inside. Thus $3.7^i\le d_{P'}(x,y)< 3.7^{i{+}1}$. We argue that these bounds also hold with respect to $P$, that is $3.7^i\le d_{P}(x,y)< 3.7^{i{+}1}$. The lower bound is implied from Lemma~\ref{lem:depth-monotonicity} because $d_{P'}(x,y)\le d_P(x,y)$. To verify the upper bound, we take an arbitrary disk $D_{xy}$ through $x$ and $y$ in the interior of $D_i$ (such a disk exists by Lemma~\ref{two-point-disk-lemma}). The disk $D_{xy}$ contains at most $3.7^{i{+}1}$ points of $P'$. Since $D_{xy}$ does not contain any point outside $D$, it has at most $3.7^{i{+}1}$ points of $P$. This verifies the upper bound on $d_P(x,y)$. Therefore, the pair $\{x,y\}$ has color $i{+}1$.
\end{proof}

\vspace{10pt}\noindent{\bf Remark.} To assist the readability of the proof of Theorem~\ref{thm:disks-plane} we avoided rounding when applying Lemma~\ref{lem:shrinking}; this has a negligible impact on the proof. The disk $D_i$ would be chosen to have $\lceil4.7\cdot 3.7^i\rceil$ points, which is smaller than $ 3.7^k$. The depth of the pair $\{x,y\}$ in $P'$ would be at least $\frac{\lceil4.7\cdot 3.7^i\rceil}{4.7}\ge 3.7^i$ and at most  $|P'|{-}\frac{|P'|}{4.7}{-}2\le\frac{3.7\lceil4.7\cdot 3.7^i\rceil}{4.7}{-}2<3.7^{i+1}$.

\vspace{10pt}\noindent{\bf Remark.}
Ramos and Via\~{n}a \cite{Ramos2009} conjectured that the correct bound in Lemma~\ref{Ramos-lemma} should be $\lfloor\frac{n}{4}\rfloor-1$. This bound, if true, would be the best achievable \cite{HaywardRappaportWenger:ExtremalResultsCirclesContainingPoints-1989}. Moreover, it would immediately improve the bound of Theorem~\ref{thm:disks-plane} to $3^k$.

\subsection{Points and balls in $\Re^d$}

Our previous result for pairs and disks can be generalized for tuples and balls in higher dimensions. We only need results analogous to Lemma~\ref{Ramos-lemma} (or Corollary~\ref{Ramos-cor})  and Lemma~\ref{lem:shrinking} in higher dimensions. 
The following lemma is analogous to Lemma~\ref{Ramos-lemma}.

\begin{lemma}[Smorodinsky, Sulovsk\'{y}, and Wagner \cite{SSW08}]
\label{ball-depth-lemma}
In any set of $n$ points in $\Re^d$ there is a subset $S$ of size $\lfloor\frac{d+3}{2}\rfloor$ such that any ball containing $S$ contains at least
$\frac{4n}{5ed^3}$ points. 
\end{lemma}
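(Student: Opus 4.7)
The plan is to reduce the ball-depth question to a halfspace-depth question in $\Re^{d+1}$ via the paraboloid lift, and then invoke a deep-flat selection lemma. Lift each $p = (p_1, \ldots, p_d) \in P$ to $\hat{p} := (p_1, \ldots, p_d, \|p\|^2) \in \Re^{d+1}$ and set $\hat{P} := \{\hat{p} : p \in P\}$. Expanding $\|p-c\|^2 \le r^2$ shows that $p$ lies in the ball of center $c$ and radius $r$ iff $\hat{p}$ lies in the lower halfspace $\{x \in \Re^{d+1} : x_{d+1} \le 2(c_1 x_1 + \cdots + c_d x_d) + r^2 - \|c\|^2\}$. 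Conversely, any lower halfspace that meets the paraboloid $x_{d+1} = x_1^2 + \cdots + x_d^2$ corresponds to a ball in $\Re^d$. Hence the lemma is equivalent to the following statement: for $n$ points on the paraboloid in $\Re^{d+1}$ in general position, there exists a subset $\hat{S}$ of size $\lfloor (d+3)/2 \rfloor$ such that every lower halfspace containing $\hat{S}$ contains at least $4n/(5ed^3)$ points of $\hat{P}$.

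The main tool for the lifted problem is a ``deep-flat'' selection theorem in $\Re^{d+1}$: for any $n$-point set, one can find a tuple of size $k = \lfloor (d+3)/2 \rfloor$ spanning a $(k-1)$-flat of halfspace-depth $\Omega(n)$. The choice $k-1 \approx (d+1)/2$ is essentially the largest flat dimension for which a single-measure centerpoint-type theorem yields linear halfspace-depth; beyond this one must move to multi-measure settings (as in the center-transversal theorem of Dol'nikov and Zivaljevic--Vrecica). I would derive the deep-flat theorem via B\'ar\'any's first-selection-lemma route: first locate a point $x$ of simplicial depth $\Omega(n^{d+2}/(d+2)^{d+2})$ in $\Re^{d+1}$, then extract a flat through $x$ whose halfspace-depth is still $\Omega(n)$ by repeatedly applying ham-sandwich cuts to slice the ambient dimension down while losing only a constant factor in depth at each step. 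Pulling this back through the lift yields a $\lfloor (d+3)/2 \rfloor$-tuple in the original point set $P$ whose ball-depth is $\Omega(n)$.

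The main obstacle, as I see it, is tracking the explicit constant $4/(5ed^3)$ through these reductions. The appearance of Euler's $e$ strongly suggests that a random sampling or averaging step is involved---most naturally, a Chernoff or Markov bound when extracting the depth of a specific flat from an aggregate simplicial-depth quantity. The polynomial factor $d^3$ in the denominator is consistent with a constant-depth-per-step guarantee amplified through $\Theta(d)$ ham-sandwich/dimension-reduction steps. A secondary but easier task is to verify that the deep-flat theorem applies with points constrained to the paraboloid: the hypothesis that no $d+2$ points of $P$ lie on a common sphere translates directly into no $d+2$ points of $\hat{P}$ being coplanar in $\Re^{d+1}$, so $\hat{P}$ is in general position and the selection theorem applies unchanged.
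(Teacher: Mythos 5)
This statement is not proved in the paper at all: it is quoted verbatim as an external result of Smorodinsky, Sulovsk\'{y} and Wagner \cite{SSW08}, so there is no in-paper proof to compare your attempt against. Judged on its own, your write-up is a plausible reconstruction of the general strategy in the literature (the paraboloid lift is correct and is indeed how balls are converted to lower halfspaces, and the translation of ``no $d+2$ points cospherical'' into general position of $\hat P$ is fine), but it is an outline with genuine gaps rather than a proof.

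Two gaps are substantive. First, your reduction target is misstated: you replace ``every halfspace containing the tuple $\hat S$ contains $\Omega(n)$ points'' by ``the $(k-1)$-flat spanned by $\hat S$ has halfspace-depth $\Omega(n)$.'' The halfspaces containing the affine flat of $\hat S$ are a strictly smaller family than the halfspaces containing $\hat S$ (equivalently, containing $\mathrm{conv}(\hat S)$, a bounded simplex), so depth of the flat does not imply depth of the tuple. One can likely repair this by a shrinking argument that moves any ball containing $S$ to one whose bounding sphere passes through all of $S$ without gaining points (possible since $|S|\le d+1$), but you neither state nor prove this step. Second, and more seriously, the entire combinatorial content of the lemma --- producing a $\lfloor\frac{d+3}{2}\rfloor$-subset \emph{of the given point set} with linear depth, with the explicit constant $\frac{4}{5ed^3}$ --- is deferred to a ``deep-flat selection theorem'' that you do not prove; you explicitly flag the constant as ``the main obstacle.'' Note also that centerpoint, ham-sandwich, and center-transversal arguments produce arbitrary points or flats, not flats spanned by data points, so the route you sketch does not obviously yield a tuple $S\subseteq P$; bridging that is precisely the hard part of \cite{SSW08} and of the earlier B\'ar\'any--Schmerl--Sidney--Urrutia bound. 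As it stands, the proposal assumes the theorem it is meant to prove.
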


The coefficient $\frac{4}{5ed^3}$ in the lemma is an improvement over the previous (smaller) constant due to B\'ar\'any et al.~\cite{BaranySchmerlSidneyUrrutia:PointsBallsEuclideanSpace-1989}. It is surprising that the bound $\lfloor\frac{d+3}{2}\rfloor$ in the lemma is strongly sharp as proved in \cite{BaranySchmerlSidneyUrrutia:PointsBallsEuclideanSpace-1989}, in the sense that, there is a set $P$ of points in $\Re^d$ (obtained on the moment curve) such that for any subset $S$ of $P$ with  $|S|<\lfloor\frac{d+3}{2}\rfloor$  there is a ball that contains $S$ but no other point of $P$.

An analog to Lemma~\ref{lem:shrinking} can be obtained in a similar fashion by repeatedly shrinking the ball to lose one point in each iteration.

\begin{lemma}
   \label{ball-shrinking1}
    Let $P$ be a finite set of points in $\Re^d$ and $B$ be a ball with $|B\cap P| = j$ for some $j\ge 1$. There exists a ball $B'$ such that $(B'\cap P) \subseteq B$ and $|B' \cap P|=j-1$.
\end{lemma}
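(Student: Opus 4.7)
The plan is to adapt the planar argument from the proof of Lemma~\ref{lem:shrinking1} to arbitrary dimension; all the topological bookkeeping transfers verbatim, and only a small Euclidean identity needs to be checked in $\Re^d$.

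First, I would fix the center $c$ of $B$ and continuously shrink the radius. Since $|B\cap P|\ge 1$, at some moment a point $q\in P$ first appears on the boundary; call the resulting ball $B_1$. No point of $P$ leaves the ball during this shrinking, so $B_1\subseteq B$ and $|B_1\cap P|=j$, with $q$ and possibly additional points of $P$ lying on $\partial B_1$. If $q$ is the unique point of $P$ on $\partial B_1$, an infinitesimal further shrink around $c$ yields a ball $B'\subseteq B_1\subseteq B$ with $|B'\cap P|=j-1$ and we are done.

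Otherwise, I would perturb $B_1$ so that $q$ becomes the unique boundary point and no new point of $P$ enters the ball. Following the 2D recipe, set $c':=c+\varepsilon(c-q)$ for a small $\varepsilon>0$ and let $B''$ be the ball centered at $c'$ of radius $|c'-q|=(1+\varepsilon)\,|c-q|$, so that $q$ remains on $\partial B''$. The step I expect to be the main (though mild) obstacle is verifying that every other boundary point $p\in P\cap \partial B_1$ now lies strictly inside $B''$, since this is the one place the argument uses more than topology. A direct computation, using only $|p-c|=|q-c|$, reduces to the identity
\[|c'-q|^2-|c'-p|^2 \;=\; 2\varepsilon\,(c-q)\cdot(p-q) \;=\; \varepsilon\,|p-q|^2 \;>\; 0,\]
which is valid in every dimension. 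Choosing $\varepsilon$ small enough ensures, by continuity and the finiteness of $P$, that $B''\cap P = B_1\cap P$, so $q$ becomes the sole point of $P$ on $\partial B''$. Applying the single-boundary-point case to $B''$ then produces a ball $B'$ with $|B'\cap P|=j-1$ and $B'\cap P\subseteq B''\cap P = B_1\cap P\subseteq B$, completing the proof.
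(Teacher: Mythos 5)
Your proposal is correct and follows the paper's proof essentially verbatim: shrink about the center until a first point $q$ hits the boundary, then (if needed) enlarge along the ray from $q$ through the center to make $q$ the unique boundary point, then shrink slightly to lose $q$. The explicit identity $|c'-q|^2-|c'-p|^2=\varepsilon|p-q|^2>0$ is a correct and welcome verification of the enlargement step that the paper leaves implicit.
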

\begin{proof}
Fix $B$ at its center $c$ and shrink it until, for the first time, a point $q\in P$ appears on  $\partial B$. If $\partial B$ contains other points of $P$, fix $B$ at $q$ and enlarge slightly along $cq$ so that $q$ becomes the only point on $\partial B$. Then fix $B$ at $c$ and shrink slightly to lose $q$.
\end{proof}

We sketch a proof of Theorem~\ref{point-ball-thr}, which is somewhat analogous to that of Theorem~\ref{thm:disks-plane}.

\begin{proof}[Proof Sketch for Theorem~\ref{point-ball-thr}]
Our proof uses a coloring approach similar to that in the proof of Theorem~\ref{thm:disks-plane}. We color every tuple $T$ of size $t$ as follows. If the depth of $T$ is 0 then color it 0 if the depth is at least $\left(\frac{4}{5ed^3}\right)^{k-2}$ then color it $k-1$, and if the depth is at least $\left(\frac{4}{5ed^3}\right)^i$ but smaller than $\left(\frac{4}{5ed^3}\right)^{i{+}1}$ then color it $i{+}1$. Then by applications of Lemma~\ref{ball-shrinking1} and Lemma~\ref{ball-depth-lemma} one can show that any ball containing at least $\left(\frac{4}{5ed^3}\right)^k$ points contains $t$-tuples of all the $k$ colors.
\end{proof} 

\vspace{10pt}\noindent{\bf Remark.}
The proof of Theorem~\ref{point-ball-thr} is not completely analogous to that of Theorem~\ref{thm:disks-plane}. The proof of Theorem~\ref{thm:disks-plane} relies on the existence of a pair whose depth is large but also not too large (Corollary~\ref{Ramos-cor}). The latter part, however, is not given by Lemma~\ref{ball-depth-lemma}, and thus the proof of Theorem~\ref{point-ball-thr} relies on the existence of a pair of large depth; later we will see a similar argument in the proof of Theorem~\ref{thm:poly-boundedVC}. Such a proof would also work for Theorem~\ref{thm:disks-plane}, giving a bound of $4.7^k$ instead of $3.7^k$.

\vspace{10pt}\noindent{\bf Remark.}
The bound of Theorem~\ref{point-ball-thr} also applies to hypergraphs defined by points and halfspaces in dimension $d$ because a halfspace can be seen as a very large ball.

\section{Coloring Tuples of Grid Points}
In this section we prove Theorem~\ref{grid-rectangle-thr} and Theorem~\ref{grid-box-thr}. Our main tool is an application of the  Lov\'{a}sz local lemma \cite{Erdos1975}. Recall that $\Hyp$ is the family of all hypergraphs $H$ that is obtained by taking the vertex set  $P$ of a regular grid in $\Re^2$ setting $V(H){:=}P$ and $E(H){:=}\{r\cap P:r$ is an axis-parallel rectangle in $\Re^2\}$. We prove for any natural number $k\ge 2$ that $f_{\Hyp}(2,k)\leq \sqrt{ck\ln k}$, for some constant $c\ge 126$.

\begin{proof}[Proof of Theorem~\ref{grid-rectangle-thr}]
Set $m:=\sqrt{ck\ln k}$. Consider any hypergraph in $\Hyp$. It suffices to prove the statement for hyperedges (i.e.~rectangles) that contain at least $m$ and at most $2m$ points, because any rectangle with more than $2m$ points can be shrunk to have between $m$ and $2m$ points. Any polychromatic coloring with respect to shrunk rectangles is also polychromatic coloring with respect to original rectangles.

We prove the existence of a $k$-coloring of pairs of $P$ using the probabilistic method and the Lov\'{a}sz local lemma~\cite{Erdos1975}; the coloring itself can be computed by the constructive proof of the local lemma that is given by Moser and Tardos~\cite{Moser2010}. 
We color the pairs in $P$ by one of the colors from $\{0,\dots,k-1\}$ uniformly at random. 
Let $R$ be any rectangle containing at least $m$ and at most $2m$ points. The number of pairs in $R$ is at least ${m \choose 2}\ge{\frac{1}{4}ck\ln k}$. Consider any color $i\in\{0,\dots,k{-}1\}$. The probability that a pair $\{x,y\}$ is not colored $i$ is 
\[\PR(\{x,y\} \text{ is not colored } i)=1-\frac{1}{k}\le e^{\frac{-1}{k}}.\]
Since the pairs are colored independently, the probability that no pair in $R$ is colored $i$ is
\[\PR(\text{no color }i \text{ in } R)\le \left(e^{\frac{-1}{k}}\right)^{\frac{1}{4}ck\ln k}=\frac{1}{k^{c/4}}.\]
Summing over all the $k$ colors, the union bound implies that the probability of a ``bad'' event $E_R$ that the rectangle $R$ is not polychromatic (not having at least one of the $k$ colors) is
\[p=\PR(E_R)\le \frac{1}{k^{\frac{c}{4}{-}1}}.\]

Notice that for two rectangles $R$ and $R'$ that do not overlap on any  grid point the events $E_R$ and $E_{R'}$ are independent.

Fix a rectangle $R$. The length and width of $R$ is at most $2m$. Thus, $R$ lies in a square $S$ of side length $2m$. The rectangle $R$ can overlap only with rectangles that are at most $2m$ points away from $R$ in each direction. Thus, the overlapping rectangles must lie in a (bigger) square $S^+$ of side length $6m$. The number of points in $S^+$ is $(6m)^2$. Thus, the number of overlapping rectangles is at most $(6m)^4$ because each such rectangle can be defined by two points (say bottom left and top right points). Therefore, any bad event $E_R$ is independent of all but at most 
\[d=6^4m^4=6^4c^2k^2\ln^2 k\le 6^4c^2k^4\] other events.
We claim that  $4pd<1$. Indeed, notice that 
\[4pd \le 4\frac{1}{k^{\frac{c}{4}{-}1}}6^4c^2k^4=\frac{4\cdot 6^4c^2}{k^{\frac{c}{4}{-}5}}\le \frac{4\cdot 6^4c^2}{2^{\frac{c}{4}{-}5}},\]
where the last inequality is valid because $k\ge 2$. The last expression is a decreasing function in $c$ over the interval $[126,+\infty)$, and its value is less than $1$ at $c=126$. 
Since $4pd<1$, the local lemma implies that with a positive probability none of the bad events occur, which in turn implies the existence of a $k$-coloring of pairs in $P$ such that every rectangle containing at least $m$ points is polychromatic. This completes the proof.
\end{proof} 

\vspace{10pt}\noindent{\bf Remark.}
Due to our desire to have a short proof, we did not optimize the constant $c$ in Theorem~\ref{grid-rectangle-thr}. It can be improved in several ways, for instance, (i) one can shrink rectangles to have between $m$ and $m+\sqrt{m}$ points instead of $m$ and $2m$ points, (ii) one can consider the neighborhood of  $R$ itself instead of the neighborhood of the enclosing square $S^+$ which is larger, (iii) to count rectangles intersecting $R$ one can exclude those that contain less than $m$ points, and (iv) by using the stronger threshold of $epd<1$ for the local lemma~\cite{Shearer1985}.

For the proof of Theorem~\ref{grid-box-thr}, recall that the grid points are in dimension $d$, the hyperedges are defined by boxes (or balls), and we want to color $t$-tuples for $t\ge 1$.
\begin{proof}[Proof Sketch for Theorem~\ref{grid-box-thr}]We apply Lov\'{a}sz local lemma similar to what we did in the proof of Theorem~\ref{grid-rectangle-thr}. Due to the shrinkability of boxes (and balls), one can consider only those that contain $m=\Theta(\left(k\ln k)^{1/t}\right)$ points. We color the $t$-tuples of points by $\{0,\dots,k-1\}$ uniformly at random. Let $B$ be a box (or a ball) containing $m$ points. Then $B$ contains ${m\choose t}=ck\ln k$ tuples of size $t$, for some constant $c=c(d,t)$ {\color{mycolor}to be chosen}. Thus  
\[p=\PR(B \text{ is not polychromatic})\le \frac{1}{k^{c{-}1}}.\]
Observe that $B$ lies in a hypercube $S^+$ of side length $3m$ such that all boxes or balls that overlap $B$ lie in $S^+$. The number of points in $S^+$ is $(3m)^d$. Thus the number of overlapping boxes is at most $(3m)^{d^2}=O((k\ln k)^{d^2/t})$ as each box can be defined by $d$ points; for balls it is $(3m)^{d(d+1)}=O((k\ln k)^{d(d+1)/t})$ as each ball is defined by $d+1$ points. Then we can choose $c$ such that $4\cdot\frac{1}{k^{c{-}1}}\cdot(3m)^{d(d+1)}<1$ after which the claim follows by the local lemma.
\end{proof}

\section{Coloring Tuples in Hypergraphs of Bounded VC-Dimension}

In this section we first prove Theorem~\ref{lem:depth-boundedVC}---the existence of a $(d+1)$-tuple of depth at least $\frac{n}{c}$, with  {\color{mycolor}$c = 4(d+1)^{d+1}$}, in any hypergraph $H$ of VC-dimension {\color{mycolor}at least 2 and } at most $d$ and with $n\ge 2d+2$ vertices. This result, of independent interest, is analogous to Lemma~\ref{Ramos-lemma} and Lemma~\ref{ball-depth-lemma}. Our proof uses the probabilistic method, which is a common technique in combinatorics and discrete geometry \cite{Alon2008,clarkson1989applications}. We then use this result to prove our coloring result of Theorem~\ref{thm:poly-boundedVC}.


\begin{proof}[Proof of Theorem~\ref{lem:depth-boundedVC}]
Let $V$ be the vertex set of $H$ and $\E$ be the hyperedge set of $H$. Let $k$ be the maximum depth of a $(d+1)$-tuple in $V$ with respect to $H$. 

If $H$ has a $(d+1)$-tuple $T$ that is not in any hyperedge of $\E$ then, by definition, $d_H=n-(d+1)\ge \frac{n}{2}$ and the lemma holds. Therefore, we assume that each $(d+1)$-tuple $T$ belongs to some hyperedge of $\E$. Let $e_T\in \E$ be a hyperedge that identifies the depth of $T$, i.e, $T\subseteq e_T$ and $d_H(T)=|e_T|-(d+1)$.

    Let $X \subseteq V$ be a random sample of the vertices from $V$ where each vertex is chosen from $X$ independently with a fixed probability $1/n\le p\le 1$ (to be determined). Let $\E'$ be the set of all hyperedges in $H[X]$ with cardinality equal to $d+1$.
    Notice that $|X|$ and $|\E'|$ are random variables.
    Also, notice that the VC-dimension $d'$ of the projection $H[X]$ is at most $d$. Thus by the Sauer-Shelah-Perles lemma \cite{Sauer1972,Shelah1972,Buzaglo2013} we have $|\E'| \leq \sum_{i=0}^{d'} {\binom{|X|}{i}}$. This inequality also holds for the expectations, and we have
  \allowdisplaybreaks  
    \begin{align*}
    \EXP(|\E'|) &\le \EXP\left(\sum_{i=0}^{d'} {\binom{|X|}{i}}\right)\\
    &{\color{mycolor}\le\EXP\left(\sum_{i=0}^{d} {\binom{|X|}{i}}\right)} \\ 
    & \leq \EXP\Big(|X| (|X| -1)\cdots (|X| - d +1) \Big) \\
    & = \sum_{i=0}^n i(i-1) \cdots (i-d+1) \binom{n}{i}p^i (1-p)^{n-i} \\ \allowdisplaybreaks
    & = \sum_{i={d}}^n \frac{n(n-1)\cdots (n-i+1)}{(i-d)!}p^i (1-p)^{n-i} \\ \allowdisplaybreaks
    & = n(n-1)\cdots (n-d+1) p^{d}\sum_{i={d}}^n \frac{(n-d)\cdots (n-i+1)}{(i-d)!}p^{i-d} (1-p)^{(n-d)-(i-d)} \\
    & = n(n-1)\cdots (n-d+1) p^{d}\\ &  \leq (np)^{d}.
    \end{align*}

    For each $(d+1)$-tuple $T$ in $V$ the probability that $T \in \E'$ is at least the probability that $e_T \cap X=T$ which is exactly $p^{d+1}(1-p)^{d_H(T)}$.
    Each such probability is at least $p^{d+1}(1-p)^k$ because $d_H(T) \leq k$ for all $T$. By interpreting $|\E'|$ as the sum of binary indicator variables (one for each $(d+1)$-tuple) and then applying the linearity of expectation, we get 
    \[
    p^{d+1}(1-p)^k {\binom{n}{d+1}} \leq \EXP(|\E'|).
    \]
    Combining this with the upper bound $(np)^d$ and setting $p:=\frac{1}{k}$ we get
    \[
    {\binom{n}{d+1}} \leq \frac{kn^d}{ (1-\frac{1}{k})^k}.
  \]
    Since $(1-1/k)^k \geq \frac{1}{4}$, we have
    \[
    {\binom{n}{d+1}} \leq 4kn^d.
    \]
    So $k \geq \frac{\binom{n}{d+1}}{4n^d} \geq \frac{n}{c}$ for {\color{mycolor}$c = 4(d+1)^{d+1}$}. This completes the proof. 
\end{proof}

Now we use Theorem~\ref{lem:depth-boundedVC} to prove Theorem~\ref{thm:poly-boundedVC} which states that for every shrinkable hypergraph $H$ with VC-dimension {\color{mycolor}at least 2 and} at most $d$ and with $n\ge 2d+2$ we have $f_H(d+1,k)\le c^{k-1}$, for any $k\ge 1$ and any $c\ge 4(d+1)^{d+1}$.

\begin{proof}[Proof of Theorem~\ref{thm:poly-boundedVC}]
Our proof uses an argument somewhat similar to that of Theorem~\ref{thm:disks-plane}. We color each $(d+1)$-tuple of vertices of $H$ with one of the $k$ colors $\{0,1,\ldots,k-1\}$  as follows:
\[ \text{color of }T =
  \begin{cases}
        0   & \quad \text{if } d_H(T) = 0 \\
    i+1  & \quad \text{if } c^i\le d_H(T) < c^{i{+}1} \text{ for some }i \in \{0,1,\ldots,k-3\}\\
    k-1  & \quad \text{if } d_H(T) \ge c^{k-2}.
  \end{cases}
\]

We prove that any hyperedge $e$ of $H$ that contains at least $c^{k{-}1}$ vertices, has $(d+1)$-tuples of all the $k$ colors. Since $H$ is shrinkable there is a hyperedge $e_0$ such that $e_0\subset e$ and $|e_0|=d+1$. The hyperedge $e_0$ itself is a $(d+1)$-tuple of depth $0$, and hence is colored $0$. 

For each $i\in\{0,\ldots,k-2\}$ we show that there is a $(d+1)$-tuple with color $i{+}1$ in $e$. Let $e_i$ be a hyperedge  such that $e_i\subseteq e$ and $|e_i|=c^{i+1}$ ($e_i$ exists by shrinkability of $H$ and the fact that $c^{i+1}\le c^{k-1}\le |e|$). Let $V$ be the set of vertices in $e_i$. The VC-dimension of $H[V]$ is at most the VC-dimension of $H$, which is at most $d$. The number of vertices of $H[V]$ is $|V|=c^{i{+}1}\ge c\ge4(d+1)^{d{+}1}\ge 2d+2$. Hence by Theorem~\ref{lem:depth-boundedVC} the hypergraph $H[V]$ has a $(d+1)$-tuple $T$ of depth at least $\frac{|V|}{c}=c^i$. On the other hand, the depth of $T$ in $H[V]$ is at most $c^{i+1}-(d+1)$ because this is the most number of vertices (other than $T$) that an edge of $H[V]$ can have. Thus $c^i\le d_{H[V]}(T)< c^{i{+}1}$. We argue that these bounds also hold with respect to $H$, that is $c^i\le d_{H}(T)< c^{i{+}1}$. The lower bound is implied from the monotonicity of depth with respect to projections. The upper bound is verified by the edge $e_i$ itself, which is an edge of $H$, it contains $T$, and it has at most $c^{i{+}1}-(d+1)$ other vertices of $H$. Therefore, $T$ has color $i+1$.
\end{proof}

\subsection{Coloring Pairs of Points with respect to Pseudo-Disks} 

Both Theorem~\ref{thm:poly-boundedVC} and Theorem~\ref{lem:depth-boundedVC} are applicable to hypergraphs defined by points and pseudo-disks in the plane as they are shrinkable \cite{Pin14} and have VC dimension at most 4 \cite{Aronov2021}. In particular Theorem~\ref{thm:poly-boundedVC} implies that that $5$-tuples of points can be $k$-colored such that any pseudo-disk, with at least $c^{k-1}$ points, contains $5$-tuples of all the $k$ colors. In what follows we present a stronger result that colors $2$-tuples (pairs of points).
   
    \begin{theorem}\label{thm:pseudo-disks}
        Let  $\D$ be a family of pseudo-disks and $P$ be a finite set of points in the plane. Then, the pairs of $P$ can be $k$-colored 
    so that any region in $\D$, containing at least $c^k$ points of $P$, also contains pairs from each of the $k$ colors, for some constant $c$.
    \end{theorem}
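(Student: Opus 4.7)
I would adapt the argument of Theorem~\ref{thm:disks-plane} (with the buffered depth-class structure used in the proof of Theorem~\ref{thm:poly-boundedVC}), exploiting the two structural properties of the hypergraph $H$ induced by $P$ and $\D$: $H$ is shrinkable~\cite{Pin14} and has VC-dimension at most $4$~\cite{Aronov2021}. The central auxiliary lemma I would establish --- call it the \emph{pseudo-disk deep-pair lemma} --- asserts the existence of an absolute constant $c_0$ such that every finite point set $Q$ in the plane admits a pair $\{x,y\}\subseteq Q$ whose depth with respect to $\D$ is at least $|Q|/c_0$.

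Granting this lemma, set $c:=c_0$ and color each pair $\{x,y\}$ of $P$ by $0$ if $d_P(x,y)=0$, by $i{+}1$ if $c^i\le d_P(x,y)<c^{i+1}$ for some $i\in\{0,\ldots,k-3\}$, and by $k{-}1$ if $d_P(x,y)\ge c^{k-2}$. Fix an arbitrary region $D\in\D$ with $|D\cap P|\ge c^k$. Shrinkability yields a sub-region $D_0\in\D$ with $|D_0\cap P|=2$, whose unique pair has depth $0$ and so lies in color class $0$. For each $i\in\{0,\ldots,k-2\}$, shrinkability supplies $D_i\in\D$ with $P':=D_i\cap P\subseteq D\cap P$ and $|P'|=c^{i+1}$. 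Applying the pseudo-disk deep-pair lemma to $P'$ under the induced hypergraph $H[P']$ produces a pair $\{x,y\}\subseteq P'$ of depth at least $c^i$ in $H[P']$. Monotonicity of depth under projections promotes this to $d_P(x,y)\ge c^i$, while $D_i$ itself contains $\{x,y\}$ together with only $c^{i+1}-2$ further points of $P$, which certifies $d_P(x,y)<c^{i+1}$. Hence $\{x,y\}$ has color $i{+}1$, and all $k$ color classes are represented in $D$.

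\textbf{Main obstacle.} The hard step is the pseudo-disk deep-pair lemma. A direct appeal to Theorem~\ref{lem:depth-boundedVC} with $d=4$ would only yield a $5$-tuple of depth $\Omega(|Q|)$, not a pair. My plan is to rerun the probabilistic sampling argument of that proof with one targeted strengthening: replace the generic Sauer--Shelah upper bound $\EXP(|\E'|)\le(np)^d$ on the projected hyperedges by the much stronger bound $\EXP(|\E'_2|)=O(|X|)$ on hyperedges of size exactly two in the sampled sub-hypergraph. This linear bound is the pseudo-disk analog of the $O(|Q|)$ bound on the number of Delaunay edges of a planar point set and follows from the linear first-level complexity of arrangements of pseudo-circles. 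Combining this with the standard lower bound $\binom{|Q|}{2}p^2(1-p)^k \le \EXP(|\E'_2|)$, where $k$ denotes the maximum pair depth, and choosing $p=1/k$ together with $(1-1/k)^k\ge 1/4$, rearranges to $k\ge\Omega(|Q|)$, which is exactly the required deep-pair lemma with $c_0=O(1)$. Once $c_0$ is fixed, the preceding coloring argument delivers $f_\Hyp(2,k)\le c^k$ with $c=c_0$.
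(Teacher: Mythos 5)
Your proposal matches the paper's argument essentially step for step: the paper also reduces everything to a pseudo-disk deep-pair lemma (its Theorem~\ref{thm:pseudo-disk-depth}, giving depth at least $n/48$), proves it by exactly your sampling argument with the Sauer--Shelah bound replaced by a linear bound on size-two hyperedges (the paper cites the planarity of the pseudo-disk ``Delaunay graph'' from \cite{Buzaglo2013} to get $\EXP(|E_X|)\le 3np$), and then runs the same depth-class coloring with shrinkability supplying the witnesses. The only difference is cosmetic: you gesture at ``linear first-level complexity of pseudo-circle arrangements'' where the paper pins down a specific planarity reference, but the underlying fact is the same.
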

\begin{proof}[Proof Sketch] The proof follows similar ideas as in the proof of Theorem~\ref{thm:disks-plane} and Theorem~\ref{thm:poly-boundedVC}. What we need is shrinkability of pseudo-disks, which is given in \cite{Pin14}, and the existence of pairs of linear depth, which we prove below in Theorem~\ref{thm:pseudo-disk-depth}.
\end{proof}

 An immediate application of this result is that for points in the plane and homothets of any convex body, we have $f(2,k) =O(c^k)$ for some constant $c$, which relates to the depth.

\begin{theorem}
    \label{thm:pseudo-disk-depth}
Let $\cal D$ be a family of pseudo-disks and let $P$ be a set of $n$ points in the plane. Then, there exits a pair $p,q$ of points in $P$ such that any region in $\cal D$ that contains $p$ and $q$ also contains at least $\frac{n}{c}$ other points of $P$ for some constant $c$; that is $d_P(p,q)\ge \frac{n}{c}$.
\end{theorem}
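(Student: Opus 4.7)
The plan is to adapt the probabilistic method from the proof of Theorem~\ref{lem:depth-boundedVC} to the case $t = 2$, replacing the generic Sauer--Shelah bound (which is wasteful here) by a much stronger property specific to pseudo-disks: the hypergraph of a point set and a pseudo-disk family has only linearly many hyperedges of size exactly $2$. Write $H = (P, E)$ with $E = \{D \cap P : D \in \mathcal{D}\}$, and let $k$ denote the maximum depth of a pair in $H$. If some pair $\{p,q\}$ is contained in no pseudo-disk, then by definition $d_P(p,q) = n-2 \geq n/c$ for $n$ large, so we are done; otherwise each pair $\{p,q\}$ has a pseudo-disk $D_{pq}$ with $|D_{pq} \cap P| = 2 + d_P(p,q)$ witnessing its depth.

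The key combinatorial input is that for any finite $X \subseteq \mathbb{R}^2$ and any pseudo-disk family, the hypergraph $H[X]$ has at most $c_1 |X|$ hyperedges of size exactly $2$ for an absolute constant $c_1$. This follows, via a planar Delaunay-type graph argument, from the linear complexity of the union of pseudo-disks (Kedem--Livne--Pach--Sharir), and it is the natural pseudo-disk analogue of the standard $O(n)$ bound on the number of Delaunay edges for disks.

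Given this input, sample $X \subseteq P$ by including each point independently with probability $\rho$, to be tuned. Let $\mathcal{E}'$ be the set of size-$2$ hyperedges of $H[X]$. The linear bound yields $\mathbb{E}[|\mathcal{E}'|] \leq c_1 \rho n$. Conversely, $\{p,q\} \in \mathcal{E}'$ whenever $D_{pq} \cap X = \{p,q\}$, an event of probability at least $\rho^2(1-\rho)^{d_P(p,q)} \geq \rho^2(1-\rho)^k$; summing over the $\binom{n}{2}$ pairs gives
\[
\rho^2 (1-\rho)^k \binom{n}{2} \leq c_1 \rho n .
\]
Choosing $\rho = 1/k$ and using $(1 - 1/k)^k \geq 1/4$ then forces $k = \Omega(n)$, which is the desired conclusion.

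The main obstacle is establishing (or cleanly citing) the linear bound on size-$2$ hyperedges for pseudo-disks; it is precisely this ingredient that allows one to bypass the generic VC-dimension argument of Theorem~\ref{lem:depth-boundedVC}, which, applied directly to pseudo-disks, would yield a $5$-tuple rather than a $2$-tuple of linear depth. Once that combinatorial fact is in hand, the rest of the proof is a direct transcription of the sampling calculation already developed for Theorem~\ref{lem:depth-boundedVC}.
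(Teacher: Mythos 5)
Your proposal is correct and follows essentially the same route as the paper: a random sample $X$ with inclusion probability $\rho=1/k$, an upper bound of $O(|X|)$ on the number of size-$2$ hyperedges of $H[X]$, and the lower bound $\rho^2(1-\rho)^k\binom{n}{2}$ on the expected number of such hyperedges. The ``main obstacle'' you flag is exactly what the paper handles by citing Lemma~1 of Buzaglo et al.~\cite{Buzaglo2013}, which states that the Delaunay-type graph of a point set with respect to a pseudo-disk family is planar and hence has at most $3|X|$ edges.
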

\begin{proof}
The proof is similar to that of Theorem~\ref{lem:depth-boundedVC}. For the sake of completeness we give the details.

If $P$ has a pair $(p,q)$ that is not contained in any region of $\cal D$, then, by definition, $d_P(p,q)=n-2$ and the lemma holds. Therefore, we assume that each pair $(p,q)$ of points in $P$ belongs to some region of $\cal D$. Let $k$ denote the maximum depth of a pair of points in $P$ with respect to $\cal D$. For every pair of points $p,q$ let  $d_{pq}\in \cal D$ be a ``witnessing" region  to the fact that the depth of $(p,q)$ is at most $k$, that is, $p\in d_{pq}$, $q\in d_{pq}$ and $|d_{pq} \cap P|-2 \leq k$.

    Let $X \subseteq P$ be a random sample of the points of $P$ where each point is chosen from $P$ independently with a fixed probability $1/n\le p\le 1$ (to be determined later). Let $E_X$ be the set of all pairs $p,q \in X$ for which there exists a region $d \in \cal D$  containing only $p,q$ and no other points of $X$, i.e.,  $d \cap X = \{p,q\}$.
    Notice that $|X|$ and $|E_X|$ are random variables.
    It is implied from a result of \cite[Lemma 1]{Buzaglo2013} that the graph $(X,E_X)$ is planar and hence it has at most $3|X|$ edges. Thus
    $|E_X|\le  3 |X|$ and hence $\EXP(|E_X|)\le \EXP( 3|X|) = 3\EXP(|X|) =3n p$.
     
    For each pair $(p,q)$ in $P$ the probability that $(p,q)\in E_X$ is at least the probability that $d_{pq} \cap X=\{p,q\}$ which is exactly $p^{2}(1-p)^{|d_{pq}\cap P| -2}$.
    Each such probability is at least $p^{2}(1-p)^k$ because $|d_{pq}\cap P| -2 \leq k$. By interpreting $|E_X|$ as the sum of binary indicator variables (one for each pair of points in $P$) and then applying the linearity of expectation, we get 
    \[
    p^{2}(1-p)^k {\binom{n}{2}} \leq \EXP(|E_X|).
    \]
    Combining this with the upper bound $3np$ and setting $p:=\frac{1}{k}$ we get
    \[
    {\binom{n}{2}} \leq \frac{3kn}{ (1-\frac{1}{k})^k}.
  \]
    Since $(1-\frac{1}{k})^k \geq \frac{1}{4}$, we have
    \[
    {\binom{n}{2}} \leq 12kn.
    \]
    So $k \geq \frac{\binom{n}{2}}{12n} \geq \frac{n}{c}$ for $c = 48$. This completes the proof. 
\end{proof}

\section{Relationships Between $f(1,k)$, $f(t,k)$, and $\epsilon$-$t$-Nets}
\label{relationship-section}
To have a $(t,k,f(t,k))$-\emph{polychromatic} coloring, every edge with at least $f(t,k)$ vertices must have $t$-tuples of all the $k$ colors. This implies that 
$\binom{{f(t,k)}}{t}  \geq k$. By the standard upper bound for the binomial coefficient, we have $\binom{{f(t,k)}}{t}  \leq \left(\frac{e\cdot f(t,k)}{t}\right)^t$. Combining the two inequalities yields the lower bound  
\begin{equation}\label{lowerbound} f(t,k)\ge \frac{1}{e}\cdot t k^{\frac{1}{t}}.\end{equation}


The following folklore result, though very simple, gives an upper bound on $f(t,k)$ in terms of $f(1,tk^{\frac{1}{t}})$. We have not seen a proof of this result elsewhere; we give a short proof.

\begin{lemma}
    \label{thm:poly-well-behaved}
    For any hypergraph $H$ and natural numbers $t$ and $k$ with $tk^{\frac{1}{t}}\ge 1$ it holds that $f_H(t,k)\le f_H(1,t k^{\frac{1}{t}})$.
\end{lemma}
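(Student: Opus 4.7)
The plan is to derive the $t$-tuple coloring directly from a vertex coloring, by encoding ``color classes of $t$-tuples'' via the distinct vertex-colors they use. Set $m := \lceil tk^{1/t}\rceil$, and let $\chi\colon V(H)\to\{1,\dots,m\}$ be a vertex coloring guaranteed by the definition of $f_H(1,m)$, so that every hyperedge of size at least $f_H(1,m)$ contains vertices of all $m$ colors.

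Next, I would verify the small combinatorial fact that $\binom{m}{t}\ge k$. Since $m\ge t$ (which follows from $tk^{1/t}\ge 1$ together with $k\ge 1$), each factor in the product $\binom{m}{t}=\prod_{i=0}^{t-1}\tfrac{m-i}{t-i}$ is at least $m/t$, so $\binom{m}{t}\ge (m/t)^t\ge k$. This lets me partition the family of $t$-element subsets of $\{1,\dots,m\}$ into $k$ nonempty classes $B_1,\dots,B_k$.

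Using this partition, I define a $t$-tuple coloring $\psi$ of $H$ as follows. For a $t$-tuple $T\subseteq V(H)$, let $\chi(T)$ denote the multiset of $\chi$-colors of its vertices. If the elements of $T$ receive $t$ pairwise distinct colors, so that $\chi(T)$ is a set $S$ of size $t$, assign $\psi(T):=i$ where $i$ is the unique index with $S\in B_i$; otherwise assign $\psi(T):=1$.

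It then remains to check that $\psi$ is $(t,k,f_H(1,m))$-polychromatic. Fix any hyperedge $e$ with $|e|\ge f_H(1,m)$; by construction $e$ meets every $\chi$-color class. For each $i\in\{1,\dots,k\}$, choose any $S\in B_i$ and pick one vertex of $e$ from each $\chi$-color in $S$; these $t$ vertices form a $t$-tuple $T\subseteq e$ with $\chi(T)=S$, hence $\psi(T)=i$. This exhibits tuples of all $k$ colors inside $e$ and therefore establishes $f_H(t,k)\le f_H(1,m)=f_H(1,\lceil tk^{1/t}\rceil)$. I do not anticipate any serious obstacle: the only subtlety is the rounding implicit in writing $f_H(1,tk^{1/t})$, which is handled by taking the ceiling, and the inequality $\binom{m}{t}\ge k$, which is immediate once $m\ge t$ is noted.
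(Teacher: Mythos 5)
Your proof is correct and follows essentially the same approach as the paper: color vertices with roughly $tk^{1/t}$ colors, identify each $t$-tuple of distinct vertex-colors with a tuple color, and use polychromaticity of the vertex coloring to find all tuple colors in a large hyperedge. Your version is slightly more careful than the paper's in two harmless respects — you take the ceiling of $tk^{1/t}$ explicitly and you partition the $\binom{m}{t}$ color-subsets into exactly $k$ classes rather than leaving the surplus implicit — but the argument is the same.
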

\begin{proof}
Set $x:=t k^{\frac{1}{t}}$ and color the vertices with $x$ colors such that any edge with $f_{H}(1,x)$ vertices is polychromatic. Let $C_1$ be the set of these $x$ colors. Now consider all combinations of $t$ different colors in $C_1$. We have ${x\choose t}$ many such combinations. Consider each combination as a new color class, and let $C_t$ be the set of the new colors (there is a one-to-one correspondence between the colors in $C_t$ and the combinations of $t$ colors in $C_1$). The standard lower bound for binomial coefficient yields \[{x \choose t}={t k^{\frac{1}{t}}  \choose t}\ge \left(\frac{tk^{\frac{1}{t}}}{t}\right)^t= k.\] Thus $C_t$ has at least $k$ colors. We use the colors in $C_t$ to color each  $t$-tuple $T$ of vertices of $H$. If the vertices of $T$ are colored by $t$ different colors in $C_1$, we color $T$ by the corresponding color in $C_t$. If the vertices of $T$ are colored by less than $t$ colors in $C_1$ (i.e., some vertices have the same color), we color $T$ by an arbitrary color in $C_t$. 

To prove that our coloring is polychromatic, consider any hyperedge $e$ with at least $f_H(1,x)$ vertices. Since $e$ is polychromatic with respect to the vertex coloring by $C_1$, it contains $x$ vertices of distinct colors. Of these vertices, each combination of size $t$ would be a $t$-tuple that is colored by a unique color in $C_t$. Based on this and the one-to-one correspondence introduced earlier, we conclude that all colors of $C_t$ appear in $t$-tuples of $e$, and hence $e$ is polychromatic with respect to $t$-tuple coloring by $C_t$.  
\end{proof}


\begin{proof}[Proof of Theorem~\ref{cor:poly-well-behaved}]
A combination of \eqref{lowerbound} and Lemma~\ref{thm:poly-well-behaved} gives an asymptotically tight bound for $f_H(t,k)$ when $f_H(1,x)$ is linear in $x$. The statement of the theorem then follows by setting $x:=tk^{1/t}$.
\end{proof}


\subsection{Relationship to $\epsilon$-$t$-nets}
The existence of small-size $\epsilon$-nets was first shown by Haussler and Welzl~\cite{hw-ensrq-87} who proved that any finite hypergraph with VC-dimension $d$ has an $\epsilon$-net of size $O((d/\epsilon)\log(d/\epsilon))$, a bound that was later improved to $O((d/\epsilon)\log(1/\epsilon))$ in~\cite{KPW92}. The $\epsilon$-nets are studied extensively in computer science and have found applications in areas such as computational geometry, algorithms,
machine learning, and social choice theory; see, e.g.,~\cite{ABKKW06,AFM18,BEHW89,Chan18}.
Alon et al.~\cite{AJKSY22} showed the existence of small-size $\epsilon$-$t$-nets in various geometric hypergraphs and hypergraphs with bounded VC-dimension. 

Here, we prove Theorem~\ref{thm:tnet-decomposition} that gives a decomposition of $t$-tuples into $\epsilon$-$t$-nets. It shows the existence of $\Omega\left((\frac{\epsilon n}{t})^t\right)$ pairwise disjoint $\epsilon$-$t$-nets in a hypergraph $H$ with $n$ vertices for which $f_H(1,x)=O(x)$.


\begin{proof}[Proof of Theorem~\ref{thm:tnet-decomposition}]
From Theorem~\ref{thm:poly-well-behaved} and linearity of $f_H(1,x)$ we have $f_{H}(t,k)= f_H(1,tk^{\frac{1}{t}})=O(tk^{\frac{1}{t}})$, for any $k$ where $tk^{\frac{1}{t}}\ge 1$. Let $k'$, with $tk'^{\frac{1}{t}}\ge 1$, be the largest integer such that $f_{H}(t,k') \leq  \epsilon n$ (notice that $f_{H}(t,k'{+}1) >  \epsilon n$). Then, every hyperedge of size at least $\epsilon n$ contains tuples of all color classes. This means that each of the $k'$ color classes is an $\epsilon$-$t$-net.  To estimate $k'$ we have $\epsilon n<f_{H}(t,k'{+}1)=O(t(k'{+}1)^{\frac{1}{t}})$. This would imply that $k' = \Omega\left((\frac{\epsilon n}{t})^t\right)$. When $t$ is a constant this gives $\Omega(\epsilon^tn^t)$.

To verify the tightness of this bound for constant $t$, consider a hyperedge $e$ with exactly $\epsilon n$ vertices. The number of distinct $t$-tuples in $e$ is ${{\epsilon n}\choose t}\le \left({\frac{e\cdot\epsilon n}{t}}\right)^t=O(\epsilon^tn^t)$. This means that we cannot decompose $t$-tuples into more than $O(\epsilon^tn^t)$ families of $\epsilon$-$t$-nets because otherwise $e$ would violate the $\epsilon$-$t$-net property of some color class. 
\end{proof}

\subsection{Cover-decomposability}
\label{cover-decomposable-section}
The case $k=2$ is usually referred to as bichromatic coloring. The notion of cover-decomposability is also derived from this case.
Recall that a hypergraph is cover-decomposable if $f(1,2)$ is bounded, and it is $t$-cover-decomposable if $f(t,2)$ is bounded.

Below we prove that $t+1\le f_H(t,2)\le\max\{f_H(1,2), t+1\}$ for any hypergraph $H$. This improves upon the previous upper bound of $\max\{f_H(1,2), 2t-1\}$, given in \cite[Proposition~4]{AckKesPalvo2021}. 

\begin{proof}[Proof of Theorem~\ref{thm:cover-decomposable-vs-tuples}]
The lower bound is obvious because in order for an edge to be polychromatic, it must contain at least two distinct $t$-tuples, and hence at least $t+1$ vertices ($t-1$ vertices could be shared between the two $t$-tuples).

Now we prove the upper bound. The upper bound holds if $f_H(1,2)$ is unbounded. Assume that $f_H(1,2)$ is bounded.  Let $V$ be the set of vertices of $H$, and let $\phi \colon V \rightarrow \{\text{red},\text{blue}\}$ be a two-coloring of the vertices in $V$ that achieves $f_H(1,2)$.
We use $\phi$ to obtain  a two-coloring $\xi$ of the $t$-tuples of $V$ such that $\xi\colon {\binom{V}{t}} \rightarrow \{\text{odd},\text{even}\}$, where $\binom{V}{t}$ denotes the set
of all $t$-element subsets of $V$. We color each $t$-tuple $T$ by the parity of the number of its blue vertices. If the number of blue vertices in $T$ is odd, then we color it odd, and if the number of blue vertices in $T$ is even, then we color it even. We show that this is a valid two-coloring of $t$-tuples, in the sense that any hyperedge $e\in H$ of cardinality at least $\max\{f_H(1,2), t+1\}$ contains two $t$-tuples such that one is colored odd and the other is colored even. 

Since $|e|\ge f_H(1,2)$, $e$ is polychromatic under $\phi$. Thus, there are two vertices in $e$, say $r$ and $b$, colored red and blue, respectively. Since $|e|\ge t+1$, $e{\setminus}\{r,b\}$ has at least $t-1$ vertices. Pick an arbitrary subset $t'$ of $e{\setminus}\{r,b\}$ of size exactly $t-1$. Then $t'\cup\{r\}$ and $t'\cup \{b\}$ are two $t$-tuples in $e$ and the cardinality of their blue vertices differs by $1$. Therefore, one of them is colored odd, and the other is colored even.
\end{proof}

\section{Conclusions}
We presented several new results on polychromatic coloring of tuples in hypergraphs. In particular we gave an exponential bound (in terms of the number of colors) for polychromatic coloring of pairs of points in the plane with respect to disks (Theorem~\ref{thm:disks-plane})---such a bound does not exist for coloring each point individually. One natural question is whether a polynomial upper bound is achievable for coloring pairs.
Another natural question is to close the gap between $\Omega(\sqrt{k})$ and $O(\sqrt{k\ln k})$ for polychromatic coloring of pairs of grid points with respect to axis-aligned rectangles. It would also be interesting to explore similar bounds for other point sets and other objects, for example, one may relate the bounds to the fatness of objects.

\section{Acknowledgement}
This work was initiated at the 20th Gremo's Workshop on Open Problems, held in Wergenstein, Switzerland, in June 2023. It was then continued at the 11th Annual Workshop on Geometry and Graphs, held at the Bellairs Research Institute in Holetown, Barbados, in March 2024. We thank the organizers and participants of both workshops.
We also thank the reviewers of SoCG 2025 who verified
our proofs and provided valuable feedback.

\bibliographystyle{plainurl}
\bibliography{references.bib}

\begin{thebibliography}{10}

\bibitem{AckKesPalvo2021}
Eyal Ackerman, Bal{\'a}zs Keszegh, and D{\"o}m{\"o}t{\"o}r P{\'a}lv{\"o}lgyi.
\newblock Coloring {D}elaunay-edges and their generalizations.
\newblock {\em Computational Geometry}, 96:101745, 2021.
\newblock \href {https://doi.org/10.1016/J.COMGEO.2021.101745}
  {\path{doi:10.1016/J.COMGEO.2021.101745}}.

\bibitem{AckermanKV17}
Eyal Ackerman, Bal{\'{a}}zs Keszegh, and M{\'{a}}t{\'{e}} Vizer.
\newblock Coloring points with respect to squares.
\newblock {\em Discret. Comput. Geom.}, 58(4):757--784, 2017.
\newblock Also in {\em SoCG'16}.
\newblock \href {https://doi.org/10.1007/S00454-017-9902-Y}
  {\path{doi:10.1007/S00454-017-9902-Y}}.

\bibitem{ABKKW06}
Noga Alon, Graham~R. Brightwell, Hal~A. Kierstead, Alexandr~V. Kostochka, and
  Peter Winkler.
\newblock Dominating sets in $k$-majority tournaments.
\newblock {\em J. Comb. Theory {B}}, 96(3):374--387, 2006.
\newblock \href {https://doi.org/10.1016/J.JCTB.2005.09.003}
  {\path{doi:10.1016/J.JCTB.2005.09.003}}.

\bibitem{AHW87}
Noga Alon, David Haussler, and Emo Welzl.
\newblock Partitioning and geometric embedding of range spaces of finite
  {V}apnik-{C}hervonenkis dimension.
\newblock In {\em Proceedings of the 3rd annual symposium on Computational
  geometry (SoCG)}, pages 331--340. {ACM}, 1987.
\newblock \href {https://doi.org/10.1145/41958.41994}
  {\path{doi:10.1145/41958.41994}}.

\bibitem{AJKSY22}
Noga Alon, Bruno Jartoux, Chaya Keller, Shakhar Smorodinsky, and Yelena
  Yuditsky.
\newblock The $\epsilon$-$t$-net problem.
\newblock {\em Discret. Comput. Geom.}, 68(2):618--644, 2022.
\newblock Also in {\em SoCG'20}.
\newblock \href {https://doi.org/10.1007/S00454-022-00376-X}
  {\path{doi:10.1007/S00454-022-00376-X}}.

\bibitem{Alon2008}
Noga Alon and Joel~H. Spencer.
\newblock {\em The Probabilistic Method, Third Edition}.
\newblock Wiley, 2008.

\bibitem{Aloupis2011}
Greg Aloupis, Jean Cardinal, S{\'{e}}bastien Collette, Shinji Imahori, Matias
  Korman, Stefan Langerman, Oded Schwartz, Shakhar Smorodinsky, and Perouz
  Taslakian.
\newblock Colorful strips.
\newblock {\em Graphs Comb.}, 27(3):327--339, 2011.
\newblock \href {https://doi.org/10.1007/S00373-011-1014-5}
  {\path{doi:10.1007/S00373-011-1014-5}}.

\bibitem{Aloupis2010}
Greg Aloupis, Jean Cardinal, S{\'{e}}bastien Collette, Stefan Langerman, David
  Orden, and Pedro Ramos.
\newblock Decomposition of multiple coverings into more parts.
\newblock {\em Discret. Comput. Geom.}, 44(3):706--723, 2010.
\newblock \href {https://doi.org/10.1007/S00454-009-9238-3}
  {\path{doi:10.1007/S00454-009-9238-3}}.

\bibitem{Aronov2021}
Boris Aronov, Anirudh Donakonda, Esther Ezra, and Rom Pinchasi.
\newblock On pseudo-disk hypergraphs.
\newblock {\em Comput. Geom.}, 92:101687, 2021.
\newblock \href {https://doi.org/10.1016/J.COMGEO.2020.101687}
  {\path{doi:10.1016/J.COMGEO.2020.101687}}.

\bibitem{AFM18}
Sunil Arya, Guilherme~Dias da~Fonseca, and David~M. Mount.
\newblock Approximate polytope membership queries.
\newblock {\em {SIAM} J. Comput.}, 47(1):1--51, 2018.
\newblock Also in {\em SODA'17}.
\newblock \href {https://doi.org/10.1137/16M1061096}
  {\path{doi:10.1137/16M1061096}}.

\bibitem{Asinowski2013}
Andrei Asinowski, Jean Cardinal, Nathann Cohen, S{\'{e}}bastien Collette,
  Thomas Hackl, Michael Hoffmann, Kolja~B. Knauer, Stefan Langerman, Michal
  Lason, Piotr Micek, G{\"{u}}nter Rote, and Torsten Ueckerdt.
\newblock Coloring hypergraphs induced by dynamic point sets and bottomless
  rectangles.
\newblock In {\em In Proceedings of the 13th International Symposium on
  Algorithms and Data Structures ({WADS})}, pages 73--84. Springer, 2013.
\newblock \href {https://doi.org/10.1007/978-3-642-40104-6\_7}
  {\path{doi:10.1007/978-3-642-40104-6\_7}}.

\bibitem{BaranySchmerlSidneyUrrutia:PointsBallsEuclideanSpace-1989}
Imre B{\'{a}}r{\'{a}}ny, James~H. Schmerl, Stuart~J. Sidney, and Jorge Urrutia.
\newblock A combinatorial result about points and balls in {E}uclidean space.
\newblock {\em Discret. Comput. Geom.}, 4:259--262, 1989.
\newblock \href {https://doi.org/10.1007/BF02187727}
  {\path{doi:10.1007/BF02187727}}.

\bibitem{BEHW89}
Anselm Blumer, Andrzej Ehrenfeucht, David Haussler, and Manfred~K. Warmuth.
\newblock Learnability and the {V}apnik-{C}hervonenkis dimension.
\newblock {\em J. {ACM}}, 36(4):929--965, 1989.
\newblock \href {https://doi.org/10.1145/76359.76371}
  {\path{doi:10.1145/76359.76371}}.

\bibitem{BollobasPRS13}
B{\'{e}}la Bollob{\'{a}}s, David Pritchard, Thomas Rothvo{\ss}, and Alex~D.
  Scott.
\newblock Cover-decomposition and polychromatic numbers.
\newblock {\em {SIAM} J. Discret. Math.}, 27(1):240--256, 2013.
\newblock Also in {\em ESA'11}.
\newblock \href {https://doi.org/10.1137/110856332}
  {\path{doi:10.1137/110856332}}.

\bibitem{Buzaglo2013}
Sarit Buzaglo, Rom Pinchasi, and G{\"u}nter Rote.
\newblock {\em Topological Hypergraphs}, pages 71--81.
\newblock Springer New York, 2013.
\newblock \href {https://doi.org/10.1007/978-1-4614-0110-0_6}
  {\path{doi:10.1007/978-1-4614-0110-0_6}}.

\bibitem{CardinalKMPUV23}
Jean Cardinal, Kolja Knauer, Piotr Micek, D{\"{o}}m{\"{o}}t{\"{o}}r
  P{\'{a}}lv{\"{o}}lgyi, Torsten Ueckerdt, and Narmada Varadarajan.
\newblock Colouring bottomless rectangles and arborescences.
\newblock {\em Comput. Geom.}, 115:102020, 2023.
\newblock \href {https://doi.org/10.1016/J.COMGEO.2023.102020}
  {\path{doi:10.1016/J.COMGEO.2023.102020}}.

\bibitem{CardinalKMU13}
Jean Cardinal, Kolja~B. Knauer, Piotr Micek, and Torsten Ueckerdt.
\newblock Making triangles colorful.
\newblock {\em J. Comput. Geom.}, 4(1):240--246, 2013.
\newblock \href {https://doi.org/10.20382/JOCG.V4I1A10}
  {\path{doi:10.20382/JOCG.V4I1A10}}.

\bibitem{CardinalKMU14}
Jean Cardinal, Kolja~B. Knauer, Piotr Micek, and Torsten Ueckerdt.
\newblock Making octants colorful and related covering decomposition problems.
\newblock {\em {SIAM} J. Discret. Math.}, 28(4):1948--1959, 2014.
\newblock Also in {\em SODA'14}.
\newblock \href {https://doi.org/10.1137/140955975}
  {\path{doi:10.1137/140955975}}.

\bibitem{Chan18}
Timothy~M. Chan.
\newblock Improved deterministic algorithms for linear programming in low
  dimensions.
\newblock {\em {ACM} Trans. Algorithms}, 14(3):30:1--30:10, 2018.
\newblock Also in {\em SODA'16}.
\newblock \href {https://doi.org/10.1145/3155312} {\path{doi:10.1145/3155312}}.

\bibitem{ChekanU22}
Vera Chekan and Torsten Ueckerdt.
\newblock Polychromatic colorings of unions of geometric hypergraphs.
\newblock In {\em Proceedings of the 48th International Workshop
  Graph-Theoretic Concepts in Computer Science ({WG})}, pages 144--157.
  Springer, 2022.
\newblock \href {https://doi.org/10.1007/978-3-031-15914-5\_11}
  {\path{doi:10.1007/978-3-031-15914-5\_11}}.

\bibitem{Chen2009}
Xiaomin Chen, J{\'{a}}nos Pach, Mario Szegedy, and G{\'{a}}bor Tardos.
\newblock Delaunay graphs of point sets in the plane with respect to
  axis-parallel rectangles.
\newblock {\em Random Struct. Algorithms}, 34(1):11--23, 2009.
\newblock Also in {\em SODA'08}.
\newblock \href {https://doi.org/10.1002/RSA.20246}
  {\path{doi:10.1002/RSA.20246}}.

\bibitem{clarkson1989applications}
Kenneth~L. Clarkson and Peter~W. Shor.
\newblock Applications of random sampling in computational geometry, {II}.
\newblock {\em Discret. Comput. Geom.}, 4(1):387--421, 1989.
\newblock Also in {\em SoCG'88}.
\newblock \href {https://doi.org/10.1007/BF02187740}
  {\path{doi:10.1007/BF02187740}}.

\bibitem{DGJM19}
Kunal Dutta, Arijit Ghosh, Bruno Jartoux, and Nabil~H. Mustafa.
\newblock Shallow packings, semialgebraic set systems, {M}acbeath regions, and
  polynomial partitioning.
\newblock {\em Discret. Comput. Geom.}, 61(4):756--777, 2019.
\newblock Also in {\em SoCG'17}.
\newblock \href {https://doi.org/10.1007/S00454-019-00075-0}
  {\path{doi:10.1007/S00454-019-00075-0}}.

\bibitem{EdelsbrunnerHasanSeidelShen:CirclesEncloseManyPoints-1989}
Herbert Edelsbrunner, N.~Hasan, Reimund Seidel, and Xiaojun Shen.
\newblock Circles through two points that always enclose many points.
\newblock {\em Geometriae Dedicata}, 32:1--12, 1989.
\newblock \href {https://doi.org/10.1007/BF00181432}
  {\path{doi:10.1007/BF00181432}}.

\bibitem{Erdos1975}
Paul Erd\H{o}s and L{\'{a}}szl{\'{o}} Lov\'{a}sz.
\newblock Problems and results on {$3$}-chromatic hypergraphs and some related
  questions.
\newblock In A.~Hajnal, R.~Rado, and V.T. S\'{o}s, editors, {\em Infinite and
  finite sets (to {P}. {E}rd\H{o}s on his 60th birthday)}, volume~II, pages
  609--627. North-Holland, 1975.

\bibitem{Gibson-Varadarajan-2011}
Matt Gibson and Kasturi~R. Varadarajan.
\newblock Optimally decomposing coverings with translates of a convex polygon.
\newblock {\em Discret. Comput. Geom.}, 46(2):313--333, 2011.
\newblock \href {https://doi.org/10.1007/S00454-011-9353-9}
  {\path{doi:10.1007/S00454-011-9353-9}}.

\bibitem{hw-ensrq-87}
David Haussler and Emo Welzl.
\newblock epsilon-nets and simplex range queries.
\newblock {\em Discret. Comput. Geom.}, 2:127--151, 1987.
\newblock Also in {\em SoCG'86}.
\newblock \href {https://doi.org/10.1007/BF02187876}
  {\path{doi:10.1007/BF02187876}}.

\bibitem{Hayward:CircleContainmentProblem-89}
Ryan Hayward.
\newblock A note on the circle containment problem.
\newblock {\em Discret. Comput. Geom.}, 4:263--264, 1989.
\newblock \href {https://doi.org/10.1007/BF02187728}
  {\path{doi:10.1007/BF02187728}}.

\bibitem{HaywardRappaportWenger:ExtremalResultsCirclesContainingPoints-1989}
Ryan Hayward, David Rappaport, and Rephael Wenger.
\newblock Some extremal results on circles containing points.
\newblock {\em Discret. Comput. Geom.}, 4(3):253--258, 1989.
\newblock \href {https://doi.org/10.1007/BF02187726}
  {\path{doi:10.1007/BF02187726}}.

\bibitem{Zoo}
Bal{\'a}zs Keszegh and D{\"o}m{\"o}t{\"o}r P{\'a}lv{\"o}lgyi.
\newblock The geometric hypergraph zoo (beta).
\newblock URL: \url{https://coge.elte.hu/cogezoo.html}.

\bibitem{KeszeghP15}
Bal{\'{a}}zs Keszegh and D{\"{o}}m{\"{o}}t{\"{o}}r P{\'{a}}lv{\"{o}}lgyi.
\newblock More on decomposing coverings by octants.
\newblock {\em J. Comput. Geom.}, 6(1):300--315, 2015.
\newblock \href {https://doi.org/10.20382/JOCG.V6I1A13}
  {\path{doi:10.20382/JOCG.V6I1A13}}.

\bibitem{KPW92}
J{\'{a}}nos Koml{\'{o}}s, J{\'{a}}nos Pach, and Gerhard~J. Woeginger.
\newblock Almost tight bounds for epsilon-nets.
\newblock {\em Discret. Comput. Geom.}, 7:163--173, 1992.
\newblock \href {https://doi.org/10.1007/BF02187833}
  {\path{doi:10.1007/BF02187833}}.

\bibitem{Matousek04}
Jir{\'{\i}} Matou\v{s}ek.
\newblock Bounded {VC}-dimension implies a fractional {H}elly theorem.
\newblock {\em Discret. Comput. Geom.}, 31(2):251--255, 2004.
\newblock \href {https://doi.org/10.1007/S00454-003-2859-Z}
  {\path{doi:10.1007/S00454-003-2859-Z}}.

\bibitem{Moser2010}
Robin~A. Moser and G{\'{a}}bor Tardos.
\newblock A constructive proof of the general {L}ov{\'{a}}sz local lemma.
\newblock {\em J. {ACM}}, 57(2):11:1--11:15, 2010.
\newblock \href {https://doi.org/10.1145/1667053.1667060}
  {\path{doi:10.1145/1667053.1667060}}.

\bibitem{Mubayi2020}
Dhruv Mubayi and Andrew Suk.
\newblock {\em A Survey of Hypergraph Ramsey Problems}, pages 405--428.
\newblock Springer International Publishing, 2020.
\newblock \href {https://doi.org/10.1007/978-3-030-55857-4_16}
  {\path{doi:10.1007/978-3-030-55857-4_16}}.

\bibitem{MV18}
Nabil Mustafa and Kasturi Varadarajan.
\newblock {Epsilon-approximations and epsilon-nets}.
\newblock In {\em {Handbook of Discrete and Computational Geometry}}. 2017.

\bibitem{MR17}
Nabil~H. Mustafa and Saurabh Ray.
\newblock $\epsilon$-{M}nets: Hitting geometric set systems with subsets.
\newblock {\em Discret. Comput. Geom.}, 57(3):625--640, 2017.
\newblock \href {https://doi.org/10.1007/S00454-016-9845-8}
  {\path{doi:10.1007/S00454-016-9845-8}}.

\bibitem{Neumann-LaraU88}
Victor Neumann{-}Lara and Jorge Urrutia.
\newblock A combinatorial result on points and circles on the plane.
\newblock {\em Discret. Math.}, 69(2):173--178, 1988.
\newblock \href {https://doi.org/10.1016/0012-365X(88)90015-5}
  {\path{doi:10.1016/0012-365X(88)90015-5}}.

\bibitem{Pach-1980}
J{\'{a}}nos Pach.
\newblock Decomposition of multiple packing and covering.
\newblock {\em Kolloquium {\" u}ber Diskrete Geometrie, Salzburg, Inst. Math.
  U. Salzburg}, pages 169--178, 1980.

\bibitem{Pach1986}
J{\'{a}}nos Pach.
\newblock Covering the plane with convex polygons.
\newblock {\em Discret. Comput. Geom.}, 1:73--81, 1986.
\newblock \href {https://doi.org/10.1007/BF02187684}
  {\path{doi:10.1007/BF02187684}}.

\bibitem{pach2013survey}
J{\'a}nos Pach, D{\"o}m{\"o}t{\"o}r P{\'a}lv{\"o}lgyi, and G{\'e}za T{\'o}th.
\newblock Survey on decomposition of multiple coverings.
\newblock {\em Geometry—Intuitive, Discrete, and Convex: A Tribute to
  L{\'a}szl{\'o} Fejes T{\'o}th}, pages 219--257, 2013.

\bibitem{pach2005indecomposable}
J{\'a}nos Pach, G{\'a}bor Tardos, and G{\'e}za T{\'o}th.
\newblock Indecomposable coverings.
\newblock In {\em China-Japan Conference on Discrete Geometry, Combinatorics
  and Graph Theory ({CJCDGCGT})}, pages 135--148. Springer, 2005.
\newblock \href {https://doi.org/10.1007/978-3-540-70666-3\_15}
  {\path{doi:10.1007/978-3-540-70666-3\_15}}.

\bibitem{pach2007decomposition}
J{\'{a}}nos Pach and G{\'{e}}za T{\'{o}}th.
\newblock Decomposition of multiple coverings into many parts.
\newblock volume~42, pages 127--133, 2009.
\newblock Also in {\em SoCG'07}.
\newblock \href {https://doi.org/10.1016/J.COMGEO.2008.08.002}
  {\path{doi:10.1016/J.COMGEO.2008.08.002}}.

\bibitem{PachPalvolgyi}
János Pach and Dömötör Pálvölgyi.
\newblock Unsplittable coverings in the plane.
\newblock {\em Advances in Mathematics}, 302:433--457, 2016.
\newblock \href {https://doi.org/10.1016/j.aim.2016.07.011}
  {\path{doi:10.1016/j.aim.2016.07.011}}.

\bibitem{Palv-Toth-2010}
D{\"{o}}m{\"{o}}t{\"{o}}r P{\'{a}}lv{\"{o}}lgyi and G{\'{e}}za T{\'{o}}th.
\newblock Convex polygons are cover-decomposable.
\newblock {\em Discret. Comput. Geom.}, 43(3):483--496, 2010.
\newblock \href {https://doi.org/10.1007/S00454-009-9133-Y}
  {\path{doi:10.1007/S00454-009-9133-Y}}.

\bibitem{Pin14}
Rom Pinchasi.
\newblock A finite family of pseudodiscs must include a ``small'' pseudodisc.
\newblock {\em SIAM J. Discret. Math.}, 28(4):1930--1934, 2014.
\newblock \href {https://doi.org/10.1137/130949750}
  {\path{doi:10.1137/130949750}}.

\bibitem{Planken2024}
Tim Planken and Torsten Ueckerdt.
\newblock Polychromatic colorings of geometric hypergraphs via shallow hitting
  sets.
\newblock In {\em In Proceedings of the 40th International Symposium on
  Computational Geometry ({SoCG})}, pages 74:1--74:14, 2024.
\newblock \href {https://doi.org/10.4230/LIPICS.SOCG.2024.74}
  {\path{doi:10.4230/LIPICS.SOCG.2024.74}}.

\bibitem{Ramos2009}
Pedro~A. Ramos and Raquel Via{\~{n}}a.
\newblock Depth of segments and circles through points enclosing many points: a
  note.
\newblock {\em Comput. Geom.}, 42(4):338--341, 2009.
\newblock \href {https://doi.org/10.1016/J.COMGEO.2008.07.001}
  {\path{doi:10.1016/J.COMGEO.2008.07.001}}.

\bibitem{Sauer1972}
Norbert Sauer.
\newblock On the density of families of sets.
\newblock {\em J. Comb. Theory {A}}, 13(1):145--147, 1972.
\newblock \href {https://doi.org/10.1016/0097-3165(72)90019-2}
  {\path{doi:10.1016/0097-3165(72)90019-2}}.

\bibitem{Shearer1985}
James~B. Shearer.
\newblock On a problem of {S}pencer.
\newblock {\em Combinatorica}, 5(3):241--245, 1985.
\newblock \href {https://doi.org/10.1007/BF02579368}
  {\path{doi:10.1007/BF02579368}}.

\bibitem{Shelah1972}
Saharon Shelah.
\newblock A combinatorial problem; stability and order for models and theories
  in infinitary languages.
\newblock {\em Pacific J. Math.}, 41:247--261, 1972.
\newblock \href {https://doi.org/doi:10.2140/pjm.1972.41.247}
  {\path{doi:doi:10.2140/pjm.1972.41.247}}.

\bibitem{SSW08}
Shakhar Smorodinsky, Marek Sulovsk{\'{y}}, and Uli Wagner.
\newblock On center regions and balls containing many points.
\newblock In {\em Proceedings of the 14th Annual International Conference on
  Computing and Combinatorics ({COCOON})}, pages 363--373. Springer, 2008.
\newblock \href {https://doi.org/10.1007/978-3-540-69733-6\_36}
  {\path{doi:10.1007/978-3-540-69733-6\_36}}.

\bibitem{smorodinsky2012polychromatic}
Shakhar Smorodinsky and Yelena Yuditsky.
\newblock Polychromatic coloring for half-planes.
\newblock {\em J. Comb. Theory {A}}, 119(1):146--154, 2012.
\newblock Also in {\em SWAT'10}.
\newblock \href {https://doi.org/10.1016/J.JCTA.2011.07.001}
  {\path{doi:10.1016/J.JCTA.2011.07.001}}.

\end{thebibliography}

\end{document}